\journal{Journal of Computer and System Sciences}
\newtheorem{theorem}{Theorem}
\newtheorem{lemma}{Lemma}
\newtheorem{example}{Example}
\newtheorem{definition}{Definition}
\newtheorem{proposition}{Proposition}
\newtheorem{corollary}{Corollary}
\newtheorem{fact}{Fact}
\newcommand{\Num}{\mathit{Num}}
\newcommand{\thefirstsort}{sort $\bf v$}
\newcommand{\thesecondsort}{sort $\bf n$}
\newcommand{\ttl}{{\tt l}}
\newcommand{\tts}{{\tt s}}
\newcommand{\ttu}{{\tt u}}
\newcommand{\ttx}{{\tt x}}
\newcommand{\tty}{{\tt y}}
\newcommand{\ttz}{{\tt z}}
\newcommand{\idx}{\mathit{index}}
\newcommand{\polylog}{\dpolylog}
\newcommand{\dpolylog}{\mathrm{PolylogTime}}
\newcommand{\polylogspace}{\mathrm{PolylogSpace}}
\newcommand{\logspace}{\mathrm{L}}
\newcommand{\nlogspace}{\mathrm{NL}}
\newcommand{\pspace}{\mathrm{PSPACE}}
\newcommand{\mA}{\mathbf{A}}
\newcommand{\mB}{\mathbf{B}}
\newcommand{\cC}{\mathfrak{C}}
\newcommand{\N}{\mathbb{N}}
\newcommand{\enc}{\mathrm{bin}}
\newcommand{\ins}{\mathrm{Ins}}
\newcommand{\IFP}[2]{\ensuremath{\left[\mathrm{IFP}_{#1}{#2}\right]}}
\newcommand{\IFPx}[1]{\IFP{\bar{\tt x},X}{#1}}
\newcommand{\val}{\mathit{val}}
\newcommand{\en}{\hat{n}}
\newcommand{\STATE}{\mathrm{STATE}}
\newcommand{\ENDMARKS}{\mathrm{ENDMARKS}}
\newcommand{\HEAD}{\mathrm{HEAD}}
\newcommand{\ONE}{\mathrm{ONE}}
\newcommand{\BLANK}{\mathrm{BLANK}}
\newcommand{\CONFIG}{\mathrm{CONFIG}}
\newcommand{\bit}{\mathrm{BIT}}
\newcommand{\var}{\mathrm{VAR}}
\begin{document}

\begin{frontmatter}

\title{Descriptive Complexity of Deterministic Polylogarithmic Time and Space\tnoteref{thanks}}
\tnotetext[thanks]{The research reported in this paper results from the project {\em Higher-Order Logics and Structures} supported by the Austrian Science Fund (FWF: \textbf{[I2420-N31]}) and the Research Foundation Flanders (FWO:\textbf{[G0G6516N]}). 
It was further supported by the the Austrian Ministry for Transport, Innovation and Technology, the Federal Ministry of Science, Research and Economy, and the Province of Upper Austria in the frame of the COMET center SCCH.}

\author[addr1]{Flavio Ferrarotti\corref{cor1}}
\ead{flavio.ferrarotti@scch.at}
\author[addr1]{Sen\'en Gonz\'alez}
\author[addr2]{Jos\'e Mar\'{\i}a Turull Torres}
\author[addr3]{Jan Van den Bussche}
\author[addr3]{Jonni Virtema}

\cortext[cor1]{Corresponding author}
\address[addr1]{Software Competence Center Hagenberg, Austria}
\address[addr2]{Universidad Nacional de La Matanza, Argentina}
\address[addr3]{Hasselt University, Belgium}

\begin{abstract}
We propose logical characterizations of problems solvable in
deterministic polylogarithmic time ($\dpolylog$) and polylogarithmic space ($\polylogspace$).  We
introduce a novel two-sorted logic that separates the elements
of the input domain from the bit positions needed to address
these elements. We prove that the inflationary and partial fixed point vartiants of this logic  capture $\dpolylog$ and $\polylogspace$, respectively. In the course of proving that our logic indeed
captures $\dpolylog$ on finite ordered structures, we introduce
a variant of random-access Turing machines that can access the
relations and functions of a structure directly.  We
investigate whether an explicit predicate for the ordering of
the domain is needed in our $\dpolylog$ logic. Finally, we present the open 
problem of finding an exact characterization of order-invariant
queries in $\dpolylog$.
\end{abstract}


\end{frontmatter}


\section{Introduction}

The research area known as Descriptive Complexity
\cite{fmta_book,gurevich_complexity,immerman_book} relates
computational complexity to logic.  For a complexity class of
interest, one tries to come up with a natural logic such that a
property of inputs can be expressed in the logic if and only if
the problem of checking the property belongs to the complexity
class.  An exemplary result in this vein is that a family $\cal
F$ of finite structures (over some fixed finite vocabulary) is
definable in existential second-order logic (ESO), if and only if
the membership problem for $\cal F$ belongs to NP
\cite{fagin_theorem}.  We also say that ESO \emph{captures} NP\@.
The complexity class P is captured, on ordered finite
structures, by a \emph{fixed point logic}:
the extensions of first-order logic with
least fixed points~\cite{imm_relpol,vardi_comp}.

After these two seminal results, many more capturing results have
been developed, and the benefits of this enterprise has been
well articulated by several authors in the references given
earlier, and others \cite{ahv_book}.  We just mention here the
advantage of being able to specify properties of structures (e.g., data
structures and databases) in a logical, declarative manner; at the
same time, we are guaranteed that our computational power is well
delineated.

The focus of the present paper is on computations taking deterministic 
polylogarithmic time, i.e., time proportional to $(\log n)^k$ for
some arbitrary but fixed $k$.  Such computations are practically
relevant and common on ordered structures.  Well known examples are binary
search in an array or search in a balanced search tree.  Another
natural example is the computation of $f(x_1,\dots,x_r)$, where
$x_1$, \dots, $x_r$ are numbers taken from the input structure
and $f$ is a function computable in polynomial time when numbers
are represented in binary.

Computations with sublinear time complexity can be formalized
in terms of Turing machines with random access to the input
\cite{immerman_book}.  When a family $\cal F$ of ordered finite structures
over some fixed finite vocabulary is defined by some deterministic 
polylogarithmic-time random-access Turing machine, we say that 
$\cal F$ belongs to the complexity class $\dpolylog$.
In this paper, we show how this complexity class can be captured by
a new logic which we call \emph{index logic}.

Index logic is two-sorted; variables of the first sort range over
the domain of the input structure.  Variables of the second sort
range over an initial segment of the natural numbers; this
segment is bounded by the logarithm of the size of the input
structure.  Thus, the elements of the second sort represent the
bit positions needed to address elements of the first sort.
Index logic includes full fixed point logic on the second sort.
Quantification over the first sort, however, is heavily
restricted.  Specifically, a variable of the first sort can only
be bound using an address specified by a subformula that defines
the positions of the bits of the address that are set.
This ``indexing mechanism'' lends index logic its name.

In the course of proving our capturing result, we consider a new
variant of random-access Turing machines.  In the standard
variant, the entire input structure is presented as one binary
string.  In our new variant, the different relations and
functions of the structure can be accessed directly.  We will
show that both variants are equivalent, in the sense that they
lead to the same notion of $\dpolylog$.  We note that, in descriptive
complexity, it is a common practice to work only with relational
structures, as functions can be identified with their
graphs.  In a sublinear-time setting, however, this does not
work.  Indeed, let $f$ be a function and denote its graph by
$\tilde f$.  If we want to know the value of $f(x)$, we cannot spend the linear
time needed to find a $y$ such that $\tilde f(x,y)$ holds.  Thus,
in this work, we allow structures containing functions as well as
relations.


We also devote attention to gaining a detailed understanding of
the expressivity of index logic.  Specifically, we observe that
order comparisons between quantified variables of the first sort
can be expressed in terms of their addresses.  For constants of
the first sort that are directly given by the structure, however,
we show that this is not possible.  In other words, index logic
without an explicit order predicate on the first sort would no
longer capture $\dpolylog$ for structures with constants. 

Finally, we introduce a variant of index logic with partial fixed point operators and show that it captures $\polylogspace$. This result is analogous to the classical result regarding the descriptive complexity of PSPACE, which is captured over ordered structures by first-order logic with the addition of partial fixed point operators~\cite{Vardi82}. For consistency, we define $\polylogspace$ using the model of direct-access Turing machines, i.e., the variant of the random-access Turing machine that we introduce in this paper. As with $\dpolylog$, both models of computation lead to the same notion of $\polylogspace$. Moreover, we show that, in the case of $\polylogspace$, random-access to the input-tape can be replaced with sequential-access without having any impact on the complexity class. Similar to PSPACE, the nondeterministic and deterministic $\polylogspace$ classes coincide. It is interesting to note that beyond the problems in nondeterministic logarithmic space, there are well known natural problems that belong to $\polylogspace$ (see examples below, under related work).   

A preliminary version of this paper was presented at the 26th International Workshop in Logic, Language, Information, and Computation~\cite{FerrarottiGTBV19}. This is an extended improved version which in addition to the full proofs of the results on deterministic polylogarithmic time reported in~\cite{FerrarottiGTBV19}, also considers polylogarithmic space and its descriptive characterization in terms of a variant of index logic.

\paragraph{Related work} 
Many natural fixed point computations, such as transitive closure, converge after a polylogarithmic number of steps. This motivated the study in~\cite{GroheP2017} of a fragment of fixed point logic with counting (FPC) that only allows polylogarithmically many iterations of the fixed point operators (\textsc{polylog}-FPC). They noted that on ordered structures \textsc{polylog}-FPC captures NC, i.e., the class of problems solvable in parallel polylogarithmic time. This holds even in the absence of counting, which on ordered structures can be simulated using fixed point operators. Moreover, an old result in~\cite{Immerman81} directly implies that \textsc{polylog}-FPC is strictly weaker than FPC with regards to expressive power.

It is well known that the (nondeterministic) logarithmic time hierarchy corresponds exactly to the set of first-order definable Boolean queries (see~\cite{immerman_book}, Theorem~5.30). The relationship between uniform families of circuits within {NC}$^1$ and nondeterministic random-access logarithmic time machines was studied in~\cite{barrington:jcss1990}. 
However, the study of descriptive complexity of classes of problems decidable by \emph{deterministic} formal models of computation in polylogarithmic time, i.e., the central topic of this paper, has been overlooked by previous works.

On the other hand, \emph{nondeterministic} polylogarithmic time complexity classes, defined in terms of alternating random-access Turing machines and related families of circuits, have received some attention~\cite{Barr92,FerrarottiGST18}.  
 Recently, a theorem analogous to Fagin's famous theorem~\cite{fagin_theorem}, was proven for nondeterministic polylogarithmic time~\cite{FerrarottiGST18}. For this task, a restricted second-order logic for finite structures, where second-order quantification ranges over relations of size at most polylogarithmic in the size of the structure, and where first-order universal quantification is bounded to those relations, was exploited. This latter work, is closely related to the work on constant depth quasi-polynomial size AND/OR circuits and the corresponding restricted second-order logic in~\cite{Barr92}. Both logics capture the full alternating polylogarithmic time hierarchy, but the additional restriction in the first-order universal quantification in the second-order logic defined in~\cite{FerrarottiGST18}, enables a one-to-one correspondence between the levels of the polylogarithmic time hierarchy and the prenex fragments of the logic, in the style of a result of Stockmeyer~\cite{Stockmeyer76} regarding the polynomial-time hierarchy. Unlike the classical results of Fagin and Stockmeyer~\cite{fagin_theorem,Stockmeyer76}, the results on the descriptive complexity of nondeterministic polylogarithmic time classes only hold over ordered structures.

Up to the authors knowledge, very little is known regarding the relationship of $\polylogspace$ with the main classical complexity classes (see~\cite{papadimitriou} and~\cite{gj_intract}). As usual, let $\logspace$ and $\nlogspace$ denote deterministic and nondeterministic logarithmic space, respectively. Further, let $\logspace^j$ denote $\mathrm{DSPACE}[(\left\lceil\log n \right\rceil)^{j}]$.
The following relations are known:
\begin{enumerate}[label=(\roman*)]
 \item $\polylogspace \neq \mathrm{P}$, and it is \emph{unknown} whether  $\polylogspace \subseteq \mathrm{P}$.
 \item $\mathrm{PolylogSpace} \neq \mathrm{NP}$, and it is \emph{unknown} whether  $\mathrm{PolylogSpace} \subseteq \mathrm{NP}$.
 \item Obviously: $\logspace \subseteq \nlogspace \subseteq \logspace^2 \subseteq \polylogspace \subseteq \mathrm{DTIME}[2^{(\left\lceil\log n \right\rceil)^{O(1)}}]$, the latter class being known as quasi-polynomial time ($\mathrm{QuasiP}$).
\item\label{item4} For all $i \geq j \geq 1$, $\logspace^j$ uniform $\mathrm{NC^{i}}$ $\subseteq \logspace^i$ (see~\cite{Borodin77}); hence we have that $\polylogspace$ $\mathrm{uniform}$ $\mathrm{NC}$ $\subseteq \polylogspace$.
\item  For all $i \geq 1$, let $\mathrm{SC}^{i} := \mathrm{DTIME{-}DSPACE}(n^{O(1)}, (\log n)^{i})$ and let $\mathrm{SC} := \bigcup_{i \in \mathbb{N}}\mathrm{SC^{i}}$ (see~\cite{Greenlaw95}). It follows that $\polylogspace = \mathrm{SC} \cap \mathrm{P}$.
\end{enumerate}

Some interesting natural problems in $\polylogspace$ which are not known to be in $\nlogspace$ follow. By item~\ref{item4} above, we get that division, exponentiation, iterated multiplication of integers~\cite{Reif86}, and integer matrix operations, such as exponentiation, computation of the determinant, rank and the characteristic polynomial (see~\citep{MateraT1997} and~\cite{GrossoHMST2000} for detailed algorithms in $\logspace^2$), are all in $\polylogspace$. Other well-known problems in the class are $k$-colorability of graphs of bounded tree-width~\cite{GottlobLS02}, primality, 3NF test, BCNF test for relational schemas of bounded tree-width~\cite{GottlobPW06,GottlobPW10}, and the circuit value problem of only EXOR gates~\cite{papadimitriou}.
Finally, in~\cite{BeaudryM95} an interesting family of problems is presented. It is shown that, for every $k \geq 1$, there is an algebra $(S; +, .)$ over matrices such that the depth $O(\log n)^{k}$ straight linear formula problem over $M(S; +, .)$ is $\mathrm{NC}^{k+1}$ complete under $\logspace$ reducibility. Now, by~\ref{item4} above, these problems are in $\mathrm{DSPACE}[(\log n)^{k+1}]$.

\section{Preliminaries}\label{sec:preliminaries}
	

We allow structures containing functions as well as relations and
constants. Unless otherwise stated, we work with finite ordered
structures of finite vocabularies. A finite structure $\bf A$ of
vocabulary \[\sigma = \{R^{r_1}_1, \ldots, R^{r_p}_p, c_1, \ldots
c_q, f^{k_1}_1, \ldots, f^{k_s}_s\},\] where each $R^{r_i}_i$ is
an $r_i$-ary relation symbol, each $c_i$ is a constant symbol, and
each $f^{k_i}_i$ is a $k_i$-ary function symbol, consists of a
finite domain $A$ and interpretations for all relation, constant,
and function symbols in $\sigma$. An interpretation of a symbol
$R^{r_i}_i$ is a relation $R^{\bf A}_i \subseteq A^{r_i}$, of a
symbol $c_i$ is a value $c_i^{\bf A} \in A$, and of a symbol
$f^{k_i}_i$ is a function $f^{\bf A}_i: A^{k_i} \rightarrow A$.
A finite ordered $\sigma$-structure $\mA$ is a finite structure of vocabulary $\sigma\cup\{\leq\}$, where $\leq\notin\sigma$ is a binary relation symbol and $\leq^\mA$ is a linear order on $A$.
Every finite ordered structure has a corresponding isomorphic structure, whose domain is an
initial segment of the natural numbers. Thus, we assume, as usual, that $A = \{0, 1, \ldots, n-1\}$, where $n$ is the cardinality $|A|$ of $A$.  

In this paper, $\log n$ always refers to the binary logarithm of $n$, i.e., $\log_2 n$. We write $\log^k n$ as a shorthand for $(\left\lceil\log n \right\rceil)^k$. A tuple of elements $(a_1,\dots,a_k)$ is sometimes written as $\bar{a}$. We then use $\bar{a}[i]$ to denote the $i$-th element of the tuple. Similarly, if $s$ is a finite string, we denote by $s[i]$ the $i$-th letter of this string. 

	\section{Deterministic polylogarithmic time}\label{sec:random-access}

		The sequential access that Turing machines have to their tapes restrict sub-linear time computations to depend only on the first sub-linear bits of the input; there is now way to access an arbitrary bit of the input.
		Therefore, logarithmic time complexity classes are usually studied using models of computation that have random-access\footnote{The term \emph{random-access} refers to the manner how \emph{random-access memory} (RAM) is read and written. In contrast to sequential memory, the time it takes to read or write using RAM is almost independent of the physical location of the data in the memory. We want to emphasise that there is nothing \emph{random} in random-access.} to their input, i.e., that can access every input address directly. As this also applies to polylogarithmic time, we adopt a Turing machine model that has a \emph{random-access} read-only input, similar to the logarithmic-time Turing machine in~\cite{barrington:jcss1990}. 

Our concept of a \emph{random-access Turing machine} is that of a
multi-tape Turing machine which consists of: (1) a finite set of
states, (2) a read-only random access \emph{input-tape}, (3) a
sequential access \emph{address-tape}, and (4) one or more (but a
fixed number of) sequential access \emph{work-tapes}. All tapes
are divided into cells, each equipped with a \emph{tape head}
which scans the cells, and are ``semi-infinite'' in the sense
that they have no rightmost cell, but have a leftmost cell. The
tape heads of the sequential access address-tape and
work-tapes can move left or right. When a head is in the
leftmost cell, it is not allowed to move left. The address-tape
alphabet only contains symbols $0$, $1$ and $\sqcup$ (for blank).
The position of the input-tape head is determined by the number
$i$ stored in binary between the leftmost cell and the first
blank cell of the address-tape (if the leftmost cell is blank, then $i$ is considered to be $0$) as follows: If $i$ is strictly
smaller than the length $n$ of the input string, then the
input-tape head is in the $(i+1)$-th cell. Otherwise, if $i \geq
n$, then the input-tape head is in the $(n+1)$-th cell scanning
the special end-marker symbol $\triangleleft$.

Formally, a \emph{random-access Turing machine} $M$ with $k$
work-tapes is a five-tuple $(Q, \Sigma, \delta, q_0, F)$.
Here $Q$ is a finite set of \emph{states}; $q_0 \in Q$ is the
\emph{initial state}. $\Sigma$ is a finite set of symbols (the
\emph{alphabet} of $M$). For simplicity, we fix $\Sigma = \{0, 1,
\sqcup\}$. $F \subseteq Q$  is the set of \emph{accepting final
states}. The \emph{transition} function of $M$ is of the form
$\delta : Q \times (\Sigma \cup
\{\triangleleft\}) \times \Sigma^{k+1} \rightarrow Q \times
(\Sigma \times \{\leftarrow, \rightarrow, - \})^{k+1}$.  
We assume that the tape head
directions $\leftarrow$ for ``left'', $\rightarrow$ for ``right''
and $-$ for ``stay'', are not in $Q \cup \Sigma$.

Intuitively, $\delta(q, a_1, a_2, \ldots, a_{k+2}) = (p, b_2,
D_2, \ldots, b_{k+2}, D_{k+2})$ means that, if $M$ is in the state
$q$, the input-tape head is scanning $a_1$, the index-tape head
is scanning $a_2$, and for every $i = 1, \ldots, k$ the head of
the $i$-th work-tape is scanning $a_{i+2}$, then the next state
will be $p$, the index-tape head will write $b_2$ and move in the
direction indicated by $D_2$, and for every $i = 1, \ldots, k$
the head of the $i$-th work-tape will write $b_{i+2}$ and move in
the direction indicated by $D_{i+2}$. Situations in which the
transition function is undefined indicate that the computation
must stop. Observe that $\delta$ cannot change the contents of
the input tape.

A \emph{configuration} of $M$ on a fixed input
$w_0$ is a $k+2$ tuple $(q, i, w_1, \ldots, w_{k})$, where $q$ is
the current state of $M$, $i \in \Sigma^*\# \Sigma^*$ represents
the current contents of the index-tape cells, and each $w_j \in
\Sigma^*\# \Sigma^*$ represents the current contents of the
$j$-th work-tape cells. We do not include the contents of the
input-tape cells in the configuration since they cannot be
changed. Further, the position of the input-tape head is uniquely
determined by the contents of the index-tape cells. The symbol
$\#$ (which we assume is not in $\Sigma$)  marks the position of
the corresponding tape head. By convention, the head scans the symbol
immediately at the right of $\#$. All symbols in the infinite
tapes not appearing in their corresponding strings $i, w_0,
\ldots, w_k$ are assumed to be the designated symbol for blank $\sqcup$.

At the beginning of a computation all work-tapes are blank,
except the input-tape, that contains the input string, and the
index-tape that contains a $0$ (meaning that the input-tape head
scans the first cell of the input-tape).  
Thus, the \emph{initial configuration} of $M$ is $(q_0, \#0, \#,
\ldots, \#)$. A \emph{computation} is a (possibly infinite) sequence of configurations which starts with the initial
configuration and, for every two consecutive configurations, the latter is obtained by applying the transition function of $M$ to the former.
An input string is \emph{accepted} if an accepting configuration, i.e., a configuration in which the current state belongs to $F$, is reached.

\begin{example}\label{example1}
Following a simple strategy, a random-access Turing machine $M$
can figure out the length $n$ of its input as well as $\lceil
\log n \rceil$ in polylogarithmic time. In its initial step, $M$
checks whether the input-tape head scans the end-marker
$\triangleleft$. If it does, then the input string is the empty
string and its work is done.  Otherwise, $M$ writes $1$ in the
first cell of its address tape and keeps writing $0$'s in its
subsequent cells right up until the input-tape head scans
$\triangleleft$. It then rewrites the last $0$ back to the blank symbol  $\sqcup$. At this point the resulting binary string in the
index-tape is of length $\lceil \log n \rceil$. Next, $M$ moves its
address-tape head back to the first cell (i.e., to the only cell containing a $1$ at this point). From here on, $M$ repeatedly moves the index head one step to the
right.  Each time it checks whether the index-tape head scans a
blank $\sqcup$ or a $0$. If $\sqcup$ then $M$ is done. If $0$, it
writes a $1$ and tests whether the input-tape head jumps to the
cell with $\triangleleft$; if so,  it rewrites a $0$, otherwise,
it leaves the $1$.  The binary number left on the index-tape at
the end of this process is $n-1$. Adding one in binary is now an
easy task. \qed
\end{example}

The \emph{formal language accepted} by a
machine $M$, denoted $L(M)$, is the set of strings accepted by
$M$. We say that $L(M) \in \mathrm{DTIME}[f(n)]$ if $M$ makes at
most $O(f(n))$ steps before accepting or rejecting an input
string of length $n$. 
We define the class of all formal languages decidable by (deterministic)
random-access Turing machines in \emph{polylogarithmic time} as
follows: \begin{equation*} \dpolylog = \bigcup_{k \in \mathbb{N}}
\mathrm{DTIME}[\log^k n] \qquad \end{equation*}
	
It follows from Example~\ref{example1} that a $\dpolylog$ random-access Turing machine can check any numerical
property that is polynomial time in the size of its input in
binary. For instance, it can check whether the length of its
input is even, by simply looking at its least-significant bit.

When we want to give a finite structure as an input to a
random-access Turing machine, we encode it as a string, adhering
to the usual conventions in descriptive complexity
theory~\cite{immerman_book}. Let $\sigma = \{R^{r_1}_1, \ldots,
R^{r_p}_p, c_1, \ldots, c_q, f^{k_1}_1, \ldots, f^{k_s}_s\}$ be a
vocabulary, and let ${\bf A}$ with $A = \{0, 1,{\dots},
{n{-}1}\}$ be an ordered structure of vocabulary $\sigma$. Note that the order on $A$ can be used to define an order for tuples of elements of $A$ as well. Each
relation $R_i^{\bf A} \subseteq A^{r_i}$ of $\bf A$ is encoded as
a binary string $\mathrm{bin}(R^{\bf A}_i)$ of length $n^{r_i}$,
where $1$ in a given position $m$ indicates that the $m$-th
tuple of $A^{r_i}$ is in $R_i^{\textbf{A}}$.  Likewise, each constant number
$c^{\bf A}_j$ is encoded as a binary string $\mathrm{bin}(c^{\bf
A}_j)$ of length $\lceil \log n \rceil$.

We also need to encode the functions of a structure.  We view $k$-ary
functions as consisting of $\lceil \log n \rceil$ many $k$-ary
relations, where the $m$-th relation indicates whether the $m$-th
bit of the value of the function is $1$. Thus, each function $f^{\bf A}_i$ is encoded as a
binary string $\mathrm{bin}(f^{\bf A}_i)$ of length $\lceil \log
n \rceil n^{k_i}$. 

The encoding of the whole structure $\mathrm{bin}(\textbf{A})$ is
the concatenation of the binary strings encoding its relations,
constants, and functions. The length $\hat{n} =
|\mathrm{bin}(\textbf{A})|$ of this string is
$n^{r_1}+\cdots+n^{r_p} + q \lceil \log n \rceil + \lceil \log n
\rceil n^{k_1}+\cdots+\lceil \log n \rceil n^{k_s}$, where $n =
|A|$ denotes the size of the input structure ${\bf A}$. Note that
$\log \hat{n} \in O(\lceil \log n \rceil)$, and hence
$\mathrm{DTIME}[\log^k \hat{n}] = \mathrm{DTIME}[\log^k n]$.

\section{Direct-access Turing machines}\label{datm}
	
In this section, we propose a new model of random-access Turing
machines. In the standard model reviewed above,
the entire input structure is
assumed to be encoded as one binary string.
In our new variant, the different relations and functions
of the structure can be accessed directly. We then show that both
variants are equivalent, in the sense that they lead to the same
notion of $\dpolylog$.  The direct-access model will then be useful to
give a transparent proof of our capturing result.

Let $\sigma = \{R^{r_1}_1, \ldots, R^{r_p}_p, c_1,
\ldots c_q, f^{k_1}_1, \ldots, f^{k_s}_s\}$ be a vocabulary. A
\emph{direct-access Turing machine that takes $\sigma$-structures
$\mA$ as an input}, 
is a multitape Turing machine with
$r_1 + \cdots + r_p + k_1 + \dots + k_s$ distinguished
work-tapes, called \emph{address-tapes}, $s$ distinguished
read-only (function) \emph{value-tapes}, $q+1$ distinguished
read-only \emph{constant-tapes}, and one or more ordinary
\emph{work-tapes}.

Let us define
  a transition function $\delta_l$
  for each tape $l$ separately.  These transition functions take
  as an input the current state of the machine, the bit read by
  each of the heads of the machine, and, for each relation
  $R_i\in \sigma$, the answer (0 or 1) to the query
$(n_1, \dots, n_{r_i}) \in R^\mA_i$.  Here,
$n_j$ denotes the number written in binary in
the $j$th distinguished tape of $R_i$.

Thus, with $m$ the total number of tapes,
the state transition function has the form
\[ \delta_Q: Q \times \Sigma^m\times \{0,1\}^p \rightarrow Q. \]
If $l$ corresponds to an
address-tape or an ordinary work-tape, we get the form
\[ \delta_l: Q \times
\Sigma^m\times \{0,1\}^p  \rightarrow \Sigma \times \{\leftarrow,
\rightarrow, - \}.  \] If $l$ corresponds to one of the read-only
tapes, we have \[ \delta_l: Q \times \Sigma^m\times \{0,1\}^p  \rightarrow
\{\leftarrow, \rightarrow, - \}.  \]

Finally we update the contents of the function value-tapes. If
$l$ is the function value-tape for a function $f_i$, then the
content of the tape $l$ is updated to $f^{\mA}_i(n_1,\dots n_{k_i})$
written in binary.  Here, $n_j$ denotes the number written in
binary in the $j$th distinguished address-tape of $f_i$
\emph{after} the execution of the above transition functions.  If
one of the $n_j$ is too large, the tape $l$ is updated to contain
only blanks. Note that the head of the tape remains in place; it
was moved by $\delta_l$ already.



In the initial configuration, read-only constant-tapes for the
constant symbols $c_1, \ldots, c_q$ hold their values in ${\bf A}$ in binary.  One additional
constant-tape (there are $q+1$ of them) holds the size $n$ of the domain of
${\bf A}$ in binary. Each address-tape, each
value-tape, and each ordinary work-tape holds only
blanks.

\begin{theorem}	\label{directrandom}
A class of finite ordered
structures $\cal C$ of some fixed vocabulary $\sigma$ is decidable by a random-access Turing machine working in
$\polylog$ with respect to $\hat{n}$,
where $\hat{n}$ is the size of the binary encoding of the input structure, iff $\cal C$ is decidable by a direct-access Turing machine in
$\polylog$ with respect to $n$,
where $n$ is the size of the domain of the input structure.
\end{theorem}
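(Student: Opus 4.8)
The plan is to prove both directions of the equivalence by mutual simulation, showing that a machine of one kind can be simulated by a machine of the other kind with only a polylogarithmic blow-up in running time. The crux is that $\log\hat n \in O(\lceil\log n\rceil)$, as noted at the end of Section~\ref{sec:random-access}, so polylogarithmic time in $\hat n$ and polylogarithmic time in $n$ coincide; the real work is showing that each primitive operation of one model can be emulated in the other in polylogarithmic time.

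\paragraph{From direct-access to random-access.}
First I would show that a direct-access machine can be simulated by an ordinary random-access machine. The random-access machine receives $\mathrm{bin}(\mathbf A)$ as its single input string. The key observation is that the direct-access machine's primitive steps---reading a constant $c_j^{\mathbf A}$, querying whether $(n_1,\dots,n_{r_i})\in R_i^{\mathbf A}$, or reading off a function value $f_i^{\mathbf A}(n_1,\dots,n_{k_i})$---can each be carried out by the random-access machine in polylogarithmic time. The random-access machine keeps the contents of each address-tape, value-tape, constant-tape, and work-tape on work-tapes of its own. For each primitive query, it must compute the bit position inside $\mathrm{bin}(\mathbf A)$ where the relevant bit lives. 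Because the encoding concatenates the relations, constants, and functions in a fixed order, and each block has a size that is a fixed polynomial in $n$ times a power of $\lceil\log n\rceil$, this position is an arithmetic function of $n$ and of the queried tuple $(n_1,\dots,n_{r_i})$; computing it amounts to evaluating polynomials in $n$ and the $n_j$'s, all of which are numbers of $O(\lceil\log n\rceil)$ bits. By the remark following Example~\ref{example1}, such arithmetic is polynomial in the bit-length and hence polylogarithmic in $n$, and it includes first recovering $n$ and $\lceil\log n\rceil$ themselves as in Example~\ref{example1}. Having computed the position, the random-access machine writes it on its address-tape and reads the answer bit directly; for a function value it repeats this $\lceil\log n\rceil$ times to assemble the binary value on the simulated value-tape. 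Thus each direct-access step costs only polylogarithmically many random-access steps, and a $\polylog$ direct-access computation becomes a $\polylog$ random-access computation.

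\paragraph{From random-access to direct-access.}
For the converse I would simulate a random-access machine by a direct-access one. The difficulty here is the reverse of the above: the direct-access machine has no single input string to address, only the structured primitives. Given an address $i$ written on the random-access machine's address-tape, the simulator must determine the value of the $i$th bit of $\mathrm{bin}(\mathbf A)$, or the end-marker $\triangleleft$ when $i\ge\hat n$. To do this, the direct-access machine computes $\hat n$ from $n$ (available on its constant-tape) and $\lceil\log n\rceil$, then decides which block of the encoding the index $i$ falls into and which tuple and component within that block it addresses---again purely arithmetic on $O(\lceil\log n\rceil)$-bit numbers, done in polylogarithmic time. Once the block and tuple are identified, a single relation query, constant-tape read, or function-value read yields the desired bit, which is fed to the simulated input-tape head. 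All remaining components of the random-access machine---its address-tape, work-tapes, and finite control---are reproduced step for step on the direct-access machine's ordinary work-tapes and control, so again each simulated step costs only polylogarithmic time.

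\paragraph{Main obstacle.}
The main obstacle, and the part deserving the most care, is the bookkeeping of the encoding layout: correctly computing, in both directions, the map between a bit position in $\mathrm{bin}(\mathbf A)$ and the triple of (which symbol of $\sigma$, which tuple of domain elements, which bit of a function value), and verifying that every such computation stays within polylogarithmic time. In particular one must handle the inversion of the tuple-ranking used to lay out a relation $R_i^{\mathbf A}\subseteq A^{r_i}$---recovering $(n_1,\dots,n_{r_i})$ from the ordinal position of a tuple in the order on $A^{r_i}$, and vice versa---which is a base-$n$ digit computation, and confirm it is feasible in time polynomial in $\lceil\log n\rceil$. Once this arithmetic is established as polylogarithmic, both simulations are routine step-by-step emulations, and the time bounds transfer because $\polylog$ in $\hat n$ equals $\polylog$ in $n$.
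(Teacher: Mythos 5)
Your proposal is correct and follows essentially the same route as the paper: a mutual step-by-step simulation in which each model keeps the other's tapes on its own work-tapes, with the real work being the polylogarithmic-time arithmetic that translates between a bit position in $\mathrm{bin}(\mathbf A)$ and the corresponding relation/constant/function query (including recovering $n$ from the input length and handling the scattered bits of function values separately). Your explicit identification of the tuple-ranking inversion as the delicate bookkeeping step matches what the paper's proof sketches but leaves implicit.
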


\begin{proof}
We will first sketch how a random-access Turing machine $M_r$ simulates a direct-access Turing machine $M_d$ on an input $\mA$. Let $n$ denote the cardinality of $A$ and $\en$ the length of $\enc(\mA)$. We dedicate a work-tape of $M_r$ to every tape of $M_d$. In addition, for each relation $R$, we add one extra tape that will always contain the answer to the query $?R(\vec{n})$. We also use additional work-tapes for convenience.  We then encode the initial configuration of $M_d$ into the tapes of $M_r$:
\begin{enumerate}
\item On the 0th constant tape,
write $n$ in binary.
\item On each tape for a constant $c_i$, write $c_{i}^{\mA}$ in binary.
\item For the answer-tapes of relations $R_i$, write the bit $0$.
\end{enumerate}
For encoding the transitions of $M_d$, we will in addition need two more constructs:
\begin{enumerate}[label=\alph*.]
\item Updating the answer-tapes of relations after each transition.
\item Updating the answer-tapes of functions after each transition.
\end{enumerate}
We now need to verify that these procedures (3.\ is trivial)
can be performed by $M_r$ in polylogarithmic time with respect to $\en$.

Step 1. On a fixed vocabulary $\sigma$, we have
$\en = f(n)$, for some fixed function $f$ of the form
\[
n^{r_1}+\cdots+n^{r_p} + q \lceil \log n \rceil + \lceil \log n \rceil n^{k_1}+\cdots+\lceil \log n \rceil n^{k_s}.
\]
We will find $n$ by executing a binary search between the numbers
$0$ and $\en$; note that checking whether a binary representation
of a number is at most $\en$, can be checked by writing the
representation to the index-tape and checking whether a bit or
$\triangleleft$ is read from the input-tape. For each i between
$0$ and $\en$,  $f(i)$ can be computed in polynomial time with respect to the length of $\en$ in binary, and thus in polylogarithmic time
with respect to $\en$.

Step 2.  The binary representation of a constant $c^\mA_i$ is written
in the input-tape between $g(n)$ and $g(n)+ \lceil \log n
\rceil$, where $g$ is a fixed function of the form \(
n^{r_1}+\cdots+n^{r_p} + (i-1) \lceil \log n \rceil.  \) The
numbers $n$ and $g(n)$ are obtained as in case 1. Then $g(n)$ is
written on the index tape and the next $\lceil \log n \rceil$
bits of the input are copied to the tape corresponding to
$c_i$.

Steps a.\ and b.\, These cases are are handled similar to each other and to the case 2. above.  The main
difference for b.\ is that the bits of the output are not in
successive positions of the input, but the location of each bit
needs to be calculated separately. 

We next sketch how a direct-access Turing machine $M_d$
simulates a random-access Turing machine $M_r$ on an input $\mA$.
First note that approach similar to the converse direction does
not work here, as we do not have enough time to directly
construct the initial configuration of $M_r$ inside  $M_d$. For
each work-tape of $M_r$, we dedicate a work-tape of $M_d$. For the
index-tape of $M_r$, we dedicate a work-tape of $M_d$ and call it
the index-tape of $M_d$. Moreover, we use some additional
work-tapes for convenience. The idea of the simulation is that
the dedicated work-tapes and the index-tape of $M_d$ copy exactly
the behaviour of the corresponding tapes of $M_r$. The additional
work-tapes are used to calculate to which part of the input of
$M_r$ the index-tape refers to. After each transition of $M_r$
this is checked so that the machine $M_d$ can update its
address-tapes accordingly.

Recall that given an input $\sigma = \{R^{r_1}_1, \ldots, R^{r_p}_p, c_1, \ldots c_q, f^{k_1}_1, \ldots, f^{k_s}_s\}$ structure $\mA$ of cardinality $n$, the input of $M_r$ is of length
\begin{equation}\label{eq:one}
 n^{r_1}+\cdots+n^{r_p} + q \lceil \log n \rceil + \lceil \log n \rceil n^{k_1}+\cdots+\lceil \log n \rceil n^{k_s}.
\end{equation}
The number written in binary on the index-tape of $M_r$ determines the position of the input that is read by $M_r$. From \eqref{eq:one} we obtain fixed functions on $n$, that we use in the simulation to check which part of the input is read when the index-tape holds a particular number.
For example,
if the index-tape holds  $n^r_1+1$, we can calculate that the
head of the input-tape of $M_r$ reads the bit answering the query:
is $\vec 0 \in R_2^{\mA}$.
We can use an extra work-tape of $M_d$ to always store the bit
that  $M_r$ is reading from its input; the rest of the simulation
is straightforward.
\end{proof}

\section{Index logic}{\label{sec:ifpplog}}

In this section, we introduce \emph{index logic}, a new logic
which over ordered finite structures captures $\dpolylog$. Our
definition of index logic is inspired by the second-order
logic in~\cite{Barr92}, where relation variables are restricted
to valuations on the sub-domain $\{0, \ldots, \lceil \log n
\rceil-1\}$ ($n$ being the size of the interpreting structure), as well as by the well known counting logics as defined in \cite{grohe_2017}.

Given a vocabulary $\sigma$, for every ordered $\sigma$-structure
$\mathbf{A}$, we define a corresponding set of natural numbers
$\textit{Num}(\mathbf{A}) = \{0,\dots,\lceil \log n \rceil-1\}$
where $n=|A|$. Note that $\textit{Num}(\mathbf{A}) \subseteq A$,
since we assume that $A$ is an initial segment of the natural
numbers. This simplifies the definitions, but it is otherwise unnecessary. 

Index logic is a two-sorted logic. Individual variables of the first sort \textbf{v} range over the domain $A$ of $\mathbf{A}$, while individual variables of the second sort \textbf{n} range over $\textit{Num}(\mathbf{A})$. We denote variables of sort \textbf{v} with  $x, y, z, \ldots$, possibly with a subindex   such as $x_0,x_1, x_2, \dots$, and variables of sort \textbf{n} with $\mathtt{x}, \mathtt{y}, \mathtt{z}$, also possibly with a subindex. Relation variables, denoted with uppercase letters $X,Y,Z, \ldots$, are always of sort \textbf{n}, and thus range over relations defined on $\textit{Num}(\mathbf{A})$.

\begin{definition}[Numerical and first-order terms]
The only terms of sort \emph{\textbf{n}} are the variables of sort  \emph{\textbf{n}}. For a vocabulary $\sigma$, the $\sigma$-terms $t$ of sort \emph{\textbf{v}} are generated by the following grammar:
\[
t ::= x \mid c \mid f(t, \ldots, t),
\]
where $x$ is a variable of sort \emph{\textbf{v}}, $c$ is a constant symbol in $\sigma$, and $f$ is a function symbol in $\sigma$.
\end{definition}

\begin{definition}[Syntax of index logic]\label{syntax}
Let $\sigma$ be a vocabulary. The formulae of \emph{index logic} $\mathrm{IL(IFP)}$ is generated by the following grammar:
\begin{multline*}
\varphi ::= t_1 \leq t_2 \mid \ttx_1 \leq \ttx_2 \mid R(t_1, \ldots, t_k) \mid X(\ttx_1, \ldots, \ttx_k) \mid  (\varphi \land \varphi) \mid \neg \varphi  \mid  [\mathrm{IFP}_{\bar{\mathtt{x}}, X} \varphi]\bar{\tty} \mid \\
t=\mathit{index}\{\mathtt{x}:\varphi(\mathtt{x})\} \mid \exists x (x=\mathit{index}\{\mathtt{x}:\alpha(\mathtt{x})\} \wedge \varphi)  \mid \exists \ttx \varphi,
\end{multline*}
where $t, t_1, \ldots, t_k$ are $\sigma$-terms of sort \emph{\textbf{v}}, $\ttx, \ttx_1, \ldots, \ttx_k$ are variables of sort \emph{\textbf{n}}, $\bar{\mathtt{x}}$ and $\bar{\tty}$ are tuples of variables of sort  \emph{\textbf{n}} whose length coincides with the arity of the relation variable $X$. Moreover, $\alpha(\ttx)$ is a formula where the variable $x$ of sort \emph{\textbf{v}} does not occur as a free variable.
\end{definition}
We also use the standard shorthand formulae  $t_1 = t_2$, $\ttx_1 = \ttx_2$, $(\varphi\lor \psi)$, and $\forall \tty \varphi$ with the obvious meanings.

The concept of a valuation is the standard one for a two-sorted logic. Thus, a \emph{valuation} over a structure $\mathbf{A}$ is any total function \textit{val} from the set of all variables of index logic to values satisfying the following constraints:
        \begin{itemize}
            \item If $x$ is a variable of sort \textbf{v}, then $\mathit{val}(x) \in A$.
            \item If $\mathtt{x}$ is a variable of sort \textbf{n}, then $\mathit{val}(\mathtt{x}) \in \textit{Num}(\mathbf{A})$.
            \item If $X$ is a relation variable with arity $r$, then $\mathit{val}(X) \subseteq (\textit{Num}(\mathbf{A}))^r$.
        \end{itemize}
        
If $\chi$ is a variable and $B$ a legal value for that variable, we write $\it{val}(B/\chi)$ to denote the valuation that maps $\chi$ to $B$ and agrees with $\it{val}$ for all other variables. Valuations extend to terms and tuples of terms in the usual way.
      
      Fixed points are defined in the standard way (see \cite{ef_fmt2} and~\cite{libkin_fmt} among others). Given an operator $F : {\cal P}(B) \rightarrow {\cal P}(B)$, a set $S \subseteq B$ is a \emph{fixed point} of $F$ if $F(S) = S$. A set $S \subseteq B$ is the \emph{least fixed point} of $F$ if it is a fixed point and, for every other fixed point $S'$ of $F$, we have $S \subseteq S'$. We denote the least fixed point of $F$ as $\mathrm{lfp}(F)$. The \emph{inflationary fixed point} of $F$, denoted by $\mathrm{ifp}(F)$, is the union of all sets $S^i$ where $S^0 := \emptyset$ and $S^{i+1} := S^i \cup F(S^i)$.   

Let $\varphi(X, \bar{\mathtt{x}})$ be a formula of vocabulary $\sigma$, where $X$ is a relation variable of arity $k$ and $\mathtt{x}$ is a $k$-tuple of variables of sort $\textbf{n}$. Let $\bf A$ be a $\sigma$-structure and $\it{val}$ a variable valuation. The formula $\varphi(X, \bar{\mathtt{x}})$ gives rise to an operator $F^{\bf A, \it{val}}_{\varphi,\bar{\mathtt{x}}, X} : {\cal P}((\textit{Num}(\mathbf{A}))^k) \rightarrow {\cal P}((\textit{Num}(\mathbf{A}))^k)$ defined as follows:
\[
F^{\bf A, \it{val}}_{\varphi, \bar{\mathtt{x}}, X}(S) := \{ \bar{a}\in (\textit{Num}(\mathbf{A}))^k \mid \mathbf{A},\mathit{val}(S/X, \bar{a}/ \bar{\mathtt{x}}) \models \varphi (X,\bar{\mathtt{x}}).
\]


\begin{definition}\label{semanticsIndexLogic}
      The formulae of $\mathrm{IL(IFP)}$ are interpreted as follows:
        \begin{itemize}
			\item $\mathbf{A},\mathit{val} \models \ttx_1 \leq \ttx_2$ iff $\mathit{val}(\ttx_1)  \leq \mathit{val}(\ttx_2)$.             
            \item $\mathbf{A},\mathit{val} \models t_1 \leq t_2$ iff $\mathit{val}(t_1) \leq \mathit{val}(t_2)$. 
            \item $\mathbf{A},\mathit{val} \models R(t_1,\dots, t_k) $ iff $(\mathit{val}(t_1),\dots,\mathit{val}(t_k))\in R^\mathbf{A}$.
            \item $\mathbf{A},\mathit{val} \models X(\ttx_1,\dots,\ttx_k) $ iff $(\mathit{val}(\ttx_1),\dots,\mathit{val}(\ttx_k)) \in \mathit{val}(X) $.
            \item $\mathbf{A},\mathit{val} \models t=\mathit{index}\{\mathtt{x}:\varphi(\mathtt{x})\}$ iff           
             $\mathit{val}(t)$ in binary is $b_m b_{m-1} \cdots b_0$, where $m = {\lceil\log |A|\rceil}-1$ and  $b_j = 1$ iff $\mathbf{A},\mathit{val}(j/\mathtt{x}) \models \varphi(\mathtt{x})$.   
            %
            \item $\mathbf{A},\mathit{val} \models [\mathrm{IFP}_{\bar{\mathtt{x}}, X} \varphi]\bar{\tty}$ iff  $\mathit{val}(\bar{\tty}) \in \mathrm{ifp}(F^{\bf A, \it{val}}_{\varphi, \bar{\mathtt{x}}, X})$.  

            \item $\mathbf{A},\mathit{val} \models \neg \varphi $ iff $\mathbf{A},\mathit{val} \not\models \varphi$.
            \item $\mathbf{A},\mathit{val} \models  \varphi \wedge \psi $ iff $\mathbf{A},\mathit{val} \models \varphi$ and $\mathbf{A},\mathit{val} \models \psi$.
           \item $\mathbf{A},\mathit{val} \models \exists \mathtt{x} \,  \varphi$ iff $\mathbf{A}, \mathit{val}(i/\mathtt{x}) \models \varphi$, for some $i\in\textit{Num}(\mathbf{A})$.

            \item $\mathbf{A},\mathit{val} \models \exists x (x=\mathit{index}\{\mathtt{x}:\alpha(\mathtt{x})\} \wedge \varphi)$ iff there exists $i\in A$ such that $\mathbf{A},\mathit{val}(i/\mathtt{x}) \models x=\mathit{index}\{\mathtt{x}:\alpha(\mathtt{x})\}$ and $\mathbf{A},\mathit{val}(i/\mathtt{x}) \models \varphi$. 

        \end{itemize}    
\end{definition} 

It immediately follows from the famous result by Gurevich and Shelah regarding the equivalence between inflationary and least fixed points~\cite{gs_fixpoint}, that an equivalent index logic can be obtained if we (1)~replace $[\mathrm{IFP}_{\bar{\mathtt{x}}, X} \varphi]\bar{\tty}$ by $[\mathrm{LFP}_{\bar{\mathtt{x}}, X} \varphi]\bar{\tty}$ in the formation rule for the fixed point operator in Definition~\ref{syntax}, adding the restriction that every occurrence of $X$ in $\varphi$ is positive\footnote{This ensures that $F^{\bf A, \it{val}}_{\varphi, \bar{\mathtt{x}}, X}$ is a monotonic function and that the least fixed point $\mathrm{lfp}(F^{\bf A,\it{val}}_{\varphi, \bar{\mathtt{x}}, X})$ exists.}, and (2)~fix the interpretation $\mathbf{A},\mathit{val} \models [\mathrm{LFP}_{\bar{\mathtt{x}}, X} \varphi]\bar{y}$ iff  $\mathit{val}(\bar{y}) \in \mathrm{lfp}(F^{\bf A, \it{val}}_{\varphi, \bar{\mathtt{x}}, X})$. 

Moreover, the convenient tool of \emph{simultaneous fixed points}, which allows one to iterate several formulae at once, can also be used here, since it does not increase the expressive power of the logic. Following the
syntax and semantics proposed by Ebbinghaus and Flum~\cite{ef_fmt2}, a version of index logic with simultaneous inflationary fixed point operators can be obtained by replacing the clause corresponding to $\mathrm{IFP}$ in Definition~\ref{syntax} by the following:
\begin{itemize}
\item If $\bar{\tty}$ is tuple of variables of sort \textbf{n}, and for $m \geq 0$ and $0 \leq i \leq m$, we have that $\bar{\mathtt{x}}_i$ is also a tuple of variables of sort \textbf{n}, $X_i$ is a relation variable whose arity coincides with the length of $\bar{\mathtt{x}}_i$, the lengths of $\bar{\tty}$ and $\bar{\mathtt{x}}_0$ are the same, and $\varphi_i$ is a formula, then $[\textrm{S-IFP}_{\bar{\mathtt{x}}_0, X_0, \ldots, \bar{\mathtt{x}}_m, X_m} \varphi_0, \ldots, \varphi_m]\bar{\tty}$ is an atomic formula.  
\end{itemize}
The interpretation is that $\mathbf{A},\mathit{val} \models [\textrm{S-IFP}_{\bar{\mathtt{x}}_0, X_0, \ldots, \bar{\mathtt{x}}_m, X_m} \varphi_0, \ldots, \varphi_m]\bar{\tty}$ iff $\mathit{val}(\bar{\tty})$ belongs to the first (here $X_0$) component of the simultaneous inflationary fixed point.

Thus, we can use index logic with the operators $\textrm{IFP}$, $\textrm{LFP}$, $\textrm{S-IFP}$ or $\textrm{S-LFP}$ interchangeably. 
%
%
%

In the next two subsections, we give two worked-out examples that illustrate the power of index logic. After that, the exact characterization of its expressive power is presented in Subsection~\ref{charProofIndexLogic}.
    
\subsection{Finding the binary representation of a term}
\label{binrepex}

Let $t$ be a term of \thefirstsort.  In this example,
we construct an index logic formula that expresses the well-known bit predicate
$\bit(t,\ttx)$. The predicate $\bit(t,\ttx)$ states that the $(\it{val}(\ttx)+1)$-th bit of $\it{val}(t)$ in binary is set.  Subsequently, the sentence $t
= \idx \{\ttx : \bit(t,\ttx)\} $ is valid over the class of all finite
ordered structures.

Informally, for a fixed term $t$, our implementation of $\bit(t,\ttx)$ works by iterating through the bit positions
$\tty$ from the most significant to the least significant.  These bits
are accumulated in a relation variable $Z$.  For each
$\tty$ we set the corresponding bit, on the condition that the
resulting number does not exceed $t$.  The set bits are collected
in a relation variable $Y$.

In the formal description of $\bit(t,\ttx)$ below, we use the
following abbreviations.  We use $M$ to denote the most
significant bit position.  Thus, formally, $\ttz = M$ abbreviates
$\forall \ttz'\, \ttz'\leq\ttz$.  Furthermore, for a unary
relation variable $Z$, we use $\ttz = \min Z$ with the obvious
meaning.  We also use abbreviations such as $\ttz=\ttz'-1$ with
the obvious meaning.

Now $\bit(t,\ttx)$ is a simultaneous fixed point
$ [\textrm{S-IFP}_{\tty,Y,\ttz,Z} \, \varphi_Y,\varphi_Z](\ttx)$,
where
\begin{align*}
\varphi_Z & := (Z = \emptyset \land \ttz=M) \lor
  (Z \neq \emptyset \land \ttz = \min Z - 1), \\
\varphi_Y & :=
  Z \neq \emptyset \land \tty = \min Z \land \exists x(x=\idx\{\ttz
  : Y(\ttz) \lor \ttz=\tty\} \land t \geq x).
\end{align*}

\subsection{Binary search in an array of key values}

In order to develop insight in how index logic works, we develop
in detail an example showing how binary search in an array of key
values can be expressed in the logic.

\newcommand{\KV}{K}
We represent the data structure as an ordered structure $\mA$
over the vocabulary consisting of a unary function $\KV$, a
constant symbol $N$, a constant symbol $T$, and a binary relation
$\prec$.  The domain of $\mA$ is an initial segment of the
natural numbers.  The constant $l:=N^\mA$ indicates the length of
the array; the domain elements $0$, $1$, \dots, $l-1$ represent
the cells of the array.  The remaining domain elements represent
key values.  Each array cell holds a key value; the assignment of
key values to array cells is given by the function ${\KV}^\mA$.

The simplicity of the above abstraction gives rise to two
peculiarities, which, however, pose no problems.  First, the
array cells belong to the range of the function $\KV$.  Thus, array
cells are allowed to play a double role as key values.  Second,
the function $\KV$ is total, so it is also defined on the domain
elements that are not array cells.  We will simply ignore $\KV$ on
that part of the domain.

We still need to discuss about $\prec$ and $T$. We assume $\prec^\mA$
to be a total order, used to compare key values.  So $\prec^\mA$
can be different from the built-in order $<^\mA$. For the binary
search procedure to work, the array needs to be sorted, i.e.,
$\mA$ must satisfy $ \forall x \forall y \Big(x<y < N \to \big(\KV(x) \preceq
\KV(y)\big)\Big)$.  Finally, the constant $t:=T^\mA$ is the test value.
Specifically, we are going to exhibit an index logic formula that
expresses that $t$ is a key value stored in the array. In other
words, we want to express the condition $$ \exists x (x < N \land
\KV(x)=T). \eqno (\gamma) $$  Note that, we express here the
condition $(\gamma)$ by a first-order
formula that is not an index logic formula.  So, our aim is to show
that $(\gamma)$ is still expressible, over all sorted arrays,
by a formula of index logic.

We recall the procedure for binary search \cite{knuth_vol3} in
the following form, using integer variables $L$, $R$ and $I$:
\begin{tabbing}
\hspace*{2em}\=\hspace*{2em}\=\hspace*{2em}\=\hspace*{2em}\kill
$L := 0$ \\
$R := N-1$ \\
\bf while $L\neq R$ do \+\\
$I := \lfloor (L+R)/2 \rfloor$ \\
\bf if $\KV(I) \succ T$ then $R := I-1$ else $L := I$  \-\\
{\bf if $\KV(L)=T$ return} `found' {\bf else return} `not found'
\end{tabbing}

We are going to express the above procedure as a simultaneous
fixed point, using binary relation variables
$L$ and $R$, and a unary relation variable $Z$.  We collect the
iteration numbers in $Z$, thus counting until the logarithm of
the size of the structure.  Relation variables $L$ and $R$ are
used to store the values, in binary representation, of the
integer variables $L$ and $R$ during all iterations.
Specifically, for each $i\in\Num(\mA)$, the value of the term
$\idx\{\ttx : L(i,\ttx)\}$ will be the value of the integer
variable $L$ before the $(i+1)$-th iteration of the while loop (and
similarly for $R$).

In the formal expression of $(\gamma)$ below, we use
the bit predicate from Section~\ref{binrepex}.
We also assume the following formulas:
\begin{itemize}
\item
A formula ${\it avg}(X,Y, \ttx)$ that expresses,
for unary relation variables $X$ and $Y$,
and a numeric variable $\ttx$,
that the bit $\ttx$ is set
in the binary representation of $\lfloor (x + y)/2 \rfloor$,
where $x$ and $y$ are the numbers represented in binary by $X$
and $Y$.
\item
A formula ${\it minusone}(X,\tty)$, expressing that the bit $\tty$ is
set in the binary representation of $x-1$, where $x$ is the
number represented in binary by $X$.
\end{itemize}
These formulas surely exist because index logic includes full
inflationary fixed point logic on the numeric sort; inflationary fixed point logic captures
PTIME on the numeric sort, and computing the average, or
subtracting one, are PTIME operations on binary numbers.

We are going to apply the formula $\it avg(X,Y,\ttx)$, where $X$ and $Y$ are
given by $L(\ttz,.)$ and $R(\ttz,.)$.
So, formally, below, we use $\it avg'(\ttz,\ttx)$ for the formula obtained
from the formula $\it avg$ by replacing each subformula of the form
$X(\ttu)$ by $L(\ttz,\ttu)$, and $Y(\ttu)$ by $R(\ttz,\ttu)$.

Furthermore, we are going to apply the formula $\it minusone(X, \ttu)$, where
$X$ is given by $\it avg'(\ttz)$.  So, formally, $\it minusone'(\ttz,\ttu)$ will
denote the formula obtained from $\it minusone(X,u)$ by replacing each
subformula of the form $X(\ttu)$ by ${\it avg}'(\ttz,\ttu)$.

A last abbreviation we will use is $\it test(\ttz)$, which will denote
the formula $\exists e (e = \idx\{\ttx : {\it avg}'(\ttz,\ttx)\} \land
\KV(e) \succ T)$.

Now $(\gamma)$ is expressed by $ \exists x (x=\idx\{\ttl :
\psi(\ttl)\} \land \KV(x)=T)$, where 
\begin{align*}
\psi(\ttl) & := \exists \tts \forall \tts' (\tts' \leq \tts \wedge [\textrm{S-IFP}_{\ttz,\ttx,L,\ttz,\ttx,R,\ttz,Z} \, \varphi_L,\varphi_R,\varphi_Z](\tts, \ttl)), \\
\varphi_Z & := (Z=\emptyset \land \ttz=0) \lor (Z\neq \emptyset
\land \ttz=\max Z + 1), \\
\varphi_L & := \begin{aligned}[t]
& Z \neq \emptyset \land \ttz=\max Z + 1 \land {}\\
& \exists \ttz' (\ttz' = \max Z \land
 ({\it test}(\ttz') \to L(\ttz',\ttx)) \land (\neg {\it test}(\ttz') \to {\it avg}'(\ttz', \ttx))),
\end{aligned}\\
\varphi_R & := \begin{aligned}[t]
& (Z=\emptyset \land \ttz=0 \land \bit(N-1,\ttx)) \lor (Z \neq \emptyset \land \ttz=\max Z + 1 \wedge {} \\
& \exists \ttz' (\ttz' = \max Z \land 
 ({\it test}(\ttz') \to {\it minusone}'(\ttz',\ttx)) \land (\neg {\it test}(\ttz') \to
R(\ttz',\ttx)))).
\end{aligned}
\end{align*}

\subsection{The logical characterization theorem for $\dpolylog$}\label{charProofIndexLogic}

The following result confirms that our logic serves our original purpose.

    \begin{theorem}\label{captureResult}    
    Over ordered structures, index logic captures $\dpolylog$.
    \end{theorem}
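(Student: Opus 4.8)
The plan is to prove the two inclusions separately: first that every sentence of index logic defines a class of structures in $\dpolylog$, and then conversely that every $\dpolylog$ class over ordered $\sigma$-structures is definable by an index logic sentence. By Theorem~\ref{directrandom} it suffices to work throughout with the direct-access model, measuring time in $\log^k n$; this is precisely what makes the oracle queries to the relations and functions of $\mA$ transparent in both directions.

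For the upper bound (index logic $\subseteq \dpolylog$) I would proceed by structural induction on formulae, showing that for every subformula a direct-access machine can compute its truth value under a given valuation in time polynomial in $\lceil\log n\rceil$, where the valuation stores each sort-\firstsort\ element by its $\lceil\log n\rceil$-bit binary code and each relation variable $X$ of arity $r$ by a bit-string of length $(\lceil\log n\rceil)^r$. The crucial observations are that the numeric sort has only $\lceil\log n\rceil$ elements, so a quantifier $\exists \ttx$ is a loop of logarithmic length and a relation variable occupies polylogarithmic space; that an atom $R(t_1,\dots,t_k)$ is decided by evaluating the terms to domain elements and issuing a single oracle query (the feature the direct-access model provides), with $f(\ldots)$ handled analogously via value-tapes; that the $\idx$ construct and the guarded quantifier $\exists x(x=\idx\{\ttx:\alpha\}\wedge\varphi)$ reduce to computing a single $\lceil\log n\rceil$-bit address by evaluating $\alpha$ at each of the logarithmically many positions; and, most importantly, that the inflationary fixed point $[\mathrm{IFP}_{\bar\ttx,X}\varphi]\bar\tty$ over $(\Num(\mA))^k$ stabilises after at most $(\lceil\log n\rceil)^k$ stages, each of which evaluates $\varphi$ at $(\lceil\log n\rceil)^k$ tuples, so the whole iteration costs only polylogarithmically many steps. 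Composing these bounds over the fixed formula yields a $\dpolylog$ decision procedure.

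For the lower bound ($\dpolylog \subseteq$ index logic) I would fix a direct-access machine $M$ deciding the class in time $\log^k n$ and construct a sentence $\varphi_M$ asserting that $M$ accepts $\mA$. The key encoding is that a time step and a tape-cell position, both bounded by $(\lceil\log n\rceil)^k$, can be named by a $k$-tuple of numeric variables read as a base-$\lceil\log n\rceil$ numeral; incrementing and comparing such tuples, as well as all bookkeeping on binary numbers, are PTIME numeric operations and hence expressible because index logic contains full IFP on the numeric sort. Using relation variables over $\Num(\mA)$ I would record the entire computation history, namely the state, the head positions, and the tape contents as predicates indexed by (time, position) tuples, and capture the one-step transition relation of $M$ by a simultaneous inflationary fixed point whose successive stages correspond to successive configurations, initialised from the constant-tapes (whose values $c_i^{\mA}$ and $n$ are described bitwise using the $\bit$ predicate of Section~\ref{binrepex}). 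The sentence $\varphi_M$ then states that an accepting state occurs in this history.

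The step I expect to be the main obstacle is faithfully encoding the \emph{oracle queries} inside the fixed-point body, since this is exactly where the numeric computation must reach into the first sort. At a given stage the address-tape for a relation $R_i$ holds a binary string, and the query answer is $R_i(n_1,\dots,n_{r_i})$ where each $n_j$ is the domain element whose bits are that string; to express this I would decode each $n_j$ by a guarded quantifier $\exists x_j(x_j=\idx\{\ttx:\psi_j(\ttx)\}\wedge\cdots)$, where $\psi_j$ refers only to the (numeric) history relations so that the syntactic restriction on $\idx$ --- that $x_j$ not occur free in $\psi_j$ --- is met, and then evaluate the genuine atom $R_i(x_1,\dots,x_{r_i})$; function value-tapes are treated the same way through terms $f_i(x_1,\dots,x_{k_i})$, together with the convention for out-of-range arguments. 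Verifying that this decoding, the initialisation, and the tuple arithmetic all respect the quantification discipline of index logic, and that the resulting fixed point indeed closes within the machine's time bound, is the technical heart of the argument.
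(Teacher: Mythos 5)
Your proposal is correct and follows essentially the same route as the paper's proof: a structural induction showing each construct (including the $(\lceil\log n\rceil)^k$-stage fixed-point iteration) is decidable in polylogarithmic time on a direct-access machine, and conversely a simulation of a direct-access machine by a simultaneous inflationary fixed point with time steps and tape positions encoded as $k$-tuples over the numeric sort. The point you flag as the technical heart --- decoding the address-tape contents through guarded $\idx$-quantifiers so that genuine atoms $R_i(x_1,\dots,x_{r_i})$ and terms $f_i(x_1,\dots,x_{k_i})$ can be queried inside the fixed-point body --- is exactly how the paper handles the oracle queries.
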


\begin{proof}${}$
%
%
\paragraph{Formulas of index logic can be evaluated in polylogarithmic time}
Let $\var$ be a finite set of variables (of sort {\bf{n}}, {\bf v}, and relational). We stipulate a Turing machine model that has a designated work-tape for each of the variables in $\var$. The idea here is that the tape designated for a variable contains the value of that variable encoded as a binary string.
We use induction on the structure of formulas to show that, for every sentence $\varphi$ of index logic, whose variables are from the set $\var$,  there exists a direct-access Turing machine $M_\varphi$ that, for every ordered structure  $\bf A$ with $|A| = n$, and every valuation $\mathit{val}$, 
decides in time $O(\lceil \log n \rceil^{O(1)})$ whether ${\bf A}, \mathit{val} \models \varphi$. Since $\var$ is an arbitrary finite set, this suffices.

%
In the proof, variables $v$ of sort \emph{\textbf{n}} and \emph{\textbf{v}} are treated in a similar way as constant symbols, meaning that their value $\mathit{val}(v)$ is written in binary in the first $\lceil \log n \rceil$ cells of their designated work-tapes. 
The work-tape designated to a relation variable $X$ of arity $k$ contains $\mathit{val}(X) \subseteq \mathit{Num}({\bf A})^{k}$ encoded as a binary string in its first $\lceil \log n \rceil^k$ cells, where a $1$ in the $i$-th cell indicates that the $i$-th tuple in the lexicographic order of $\mathit{Num}({\bf A})^{k}$ is in $\mathit{val}(X)$.

We will show first, by induction on the structure of terms, that, if $t$ is term, $M$ a direct-access Turing machine, and $\mathit{val}$ a valuation such that, for every variable $\chi$ that occurs in $t$, the value $\mathit{val}(\chi)$ is written in binary in the designated work-tape of $\chi$,  then $\mathit{val}(t)$ can be computed by $M$ in time $O(\lceil \log n \rceil^{O(1)})$.
%
 If $t$ is a variable of sort \emph{\textbf{n}} or \emph{\textbf{v}}, or a constant symbol, then $M$ only needs to read the first $\lceil \log n \rceil$ cells of the appropriate work-tape or constant-tape, respectively. If $t$ is a term of the form $f_i(t_1, \ldots, t_k)$, we access and copy each $\mathit{val}(t_j)$ in binary in the corresponding address-tapes of $f_i$. By the induction hypothesis, this takes time $O(\lceil \log n \rceil^{O(1)})$ each. Using $\lceil \log n \rceil$ additional steps the result of length $\lceil \log n \rceil$ will then be accessible in the value-tape of $f_i$.

We will next use induction to prove our main claim. Note that, the cases for quantifiers assure that the assumptions needed for the calculation of the values of terms are met. We will show by induction that, if $\varphi$ is a formula with variables in $\var$,  ${\it{val}}$ a valuation, and $M$ a direct-access Turing machine, such that, for every variable $\chi$ that occurs free in $\varphi$, the value $\mathit{val}(\chi)$ is written in binary in the designated work-tape of $\chi$,  then ${\bf A}, \mathit{val}\models \psi$ can be decided by $M$ in time $O(\lceil \log n \rceil^{O(1)})$.

%

If $\varphi$ is an atomic formula of the form $t_1 \leq t_2$, $M$ can evaluate $\varphi$ in polylogarithmic time by accessing the values of $t_1$ and $t_2$ in binary and then comparing their $\lceil \log n \rceil$ bits. 
    
    If $\varphi$ is an atomic formula of the form $R_i(t_1,\dots, t_k)$, $M$ can evaluate $\varphi$ in polylogarithmic time by simply computing the values of the terms $t_1,\dots, t_k$ and copying the values to the corresponding address-tapes of $R_i$. By the proof for terms above, computing the values of the terms take polylogarithmic time each, and since the values have $\lceil \log n \rceil$ bits, also the copying can be done in polylogarithmic time.
    
    If $\varphi$ is an atomic formula of the form $X(\ttx_1,\dots,\ttx_k)$, $M$ can evaluate $\varphi$ in polylogarithmic time by accessing the values $\ttx_1,\dots, \ttx_k$ in binary, computing the position $i$ of the tuple $(\ttx_1,\dots, \ttx_k)$ in the lexicographic order of $\mathit{Num}({\bf A})^{k}$ in binary, and then accessing the $i$-th cell of the work-tape which contains the encoding of $\mathit{val}(X)$ of length $\lceil \log n \rceil^k$. Computing $i$ in binary involves simple arithmetic operations on binary numbers of length bounded by $\log (\lceil \log n \rceil^k)$, which can clearly be done in time polynomial in $\log n$. 
    
If $\varphi$ is an atomic formula of the form $t=\mathit{index}\{\mathtt{x}:\psi(\mathtt{x})\}$, $M$ proceeds as follows. Let $s = \lceil \log n \rceil - 1$ and let $b_s b_{s-1}\cdots b_0$ be $\mathit{val}(t)$ in binary. For every $i$, $0\leq i \leq s$,
$M$ writes $i$ in binary in the work-tape designated for the variable $\mathtt{x}$  and checks whether ${\bf A}, \mathit{val}(i/\mathtt{x}) \models \psi(\mathtt{x})$ iff $b_i = 1$. Since, by the induction hypothesis, this check can be done in polylogarithmic time, and $\mathit{val}(t)$ can be computed in polylogarithmic time,  we get that $M$ decides $t=\mathit{index}\{\mathtt{x}:\varphi(\mathtt{x})\}$ in polylogarithmic time as well. 

If $\varphi$ is a formula of the form $[\mathrm{IFP}_{\bar{\mathtt{x}}, X} \psi]\bar{y}$, where the arity of $X$ is $k$,
let $F^{\bf A, \it{val}}_{\psi,\bar{\mathtt{x}}, X} : {\cal P}((\textit{Num}(\mathbf{A}))^k) \rightarrow {\cal P}((\textit{Num}(\mathbf{A}))^k)$ denote the related operator, $F^0 :=  \emptyset$, and $F^{i+1} := F^{i} \cup F^{\bf A, \it{val}}_{\psi,\bar{\mathtt{x}}, X} (F^i)$, for each $i\geq 0$. The inflationary fixed point is reached on stage $\lvert \textit{Num}(\mathbf{A})^k \rvert$, at the latest, and thus
$
\mathrm{ifp}(F^{\bf A, \it{val}}_{\psi,\bar{\mathtt{x}}, X}) = F^{\log^k n}.
$
Recall that
\[
F^{\bf A, \it{val}}_{\psi, \bar{\mathtt{x}}, X}(S) := \{ \bar{a}\in (\textit{Num}(\mathbf{A}))^k \mid \mathbf{A},\mathit{val}(S/X, \bar{a}/ \bar{\mathtt{x}}) \models \psi (X,\bar{\mathtt{x}})\}.
\]
We calculate $F^{i+1}$ from  $F^{i}$ as follows. Note that on each stage, the value of $F^{i}$ is written in binary on the work-tape designated for $X$. We first calculate the value of $F^{i+1}$ in binary on another work-tape, and then reformat the contents of the work-tape designated for $X$ to contain the value of $F^{i+1}$. For $i=0$, we format the work-tape designated for $X$ to contain a string of $0$s of length $\log^k n$. In order to calculate $F^{i+1}$ from $F^{i}$, we go through all $k$-tuples $\bar{a}\in (\textit{Num}(\mathbf{A}))^k$ in the lexicographic order. For $1 \leq j\leq k$, we write $\bar{a}[j]$ in binary on the designated work-tape for $\bar{\ttx}[j]$ and check whether
\begin{equation}\label{eq:1}
\mathbf{A},\mathit{val}(S/X, \bar{a}/ \bar{\mathtt{x}}) \models \psi (X,\bar{\mathtt{x}})
\end{equation}
holds. By induction hypothesis, this can be checked in time $O(\lceil \log n \rceil^{O(1)})$. If $\eqref{eq:1}$ holds and  $\bar{a}$ is the $l$-th k-tuple in the lexicographic ordering, we write $1$ to the $l$-th cell of the work-tape, where the value of $F^{i+1}$ is being constructed, otherwise we write $0$ to this cell. Hence the computation of $F^{i+1}$ from $F^{i}$ can be done in time $\log^k n \times O(\lceil \log n \rceil^{O(1)})$ which is still $O(\lceil \log n \rceil^{O(1)})$. It is now clear that $\mathrm{ifp}(F^{\bf A, \it{val}}_{\psi,\bar{\mathtt{x}}, X}) = F^{\log^k n}$ can be computed in time $O(\lceil \log n \rceil^{O(1)})$ as well. Finally, determining whether ${\it val}(\bar{y})$ is included in the fixed point is clearly computable in $O(\lceil \log n \rceil^{O(1)})$, for one must just calculate the position of ${\it val}(\bar{y})$ in the lexicographic order of $k$-tuples, and then check whether that position has a $0$ or $1$ in the work-tape corresponding to $X$.


If $\varphi$ is a formula of the form $\exists x (x=\mathit{index}\{\mathtt{x}:\alpha(\mathtt{x})\} \wedge \psi(x))$, $M$ proceeds as follows. For each $i \in \{0, \ldots, \lceil \log n \rceil-1\}$, $M$ writes $i$ in binary in the work-tape designated for $\mathtt{x}$ and checks whether ${\bf A}, \mathit{val}(i/\mathtt{x}) \models \alpha(\mathtt{x})$.
Since, by definition, $x$ does not appear free in $\alpha(\mathtt{x})$, it follows by the induction hypothesis that $M$ can perform each of these checks in polylogarithmic time. In parallel, $M$
writes the bit string $b_s b_{s-1}\cdots b_0$, defined such that $b_i = 1$ iff ${\bf A},\mathit{val}(i/\mathtt{x}) \models \alpha(\mathtt{x})$, to the work-tape designated to the variable $x$.
Let the content of this work-tape at the end of this process be $t$ in binary. $M$ can now check whether $t < n$ (recall that by convention, $M$ has the value $n$ in binary in one of its constant-tapes and thus this can be done in polylogarithmic time). If $t \geq n$ then ${\bf A}, \mathit{val} \not\models \varphi$.  If $t < n$, then $M$ checks whether ${\bf A}, \mathit{val}(t/x) \models \psi$, 
which by the induction hypothesis can also be done in polylogarithmic time. 

Finally, if $\varphi$ is a formula of the form $\exists \mathtt{x} \,\psi$, then for each $i \in \{0, \ldots, \lceil \log n \rceil-1\}$, $M$ writes $i$ in binary to the work-tape designated for $\mathtt{x}$ and checks whether ${\bf A}, \mathit{val}(i/\mathtt{x}) \models \psi$.
It follows by the induction hypothesis that $M$ can perform each of these checks in polylogarithmic time. If the test is positive for some $i$ then ${\bf A}, \mathit{val} \models \varphi$.
The remaining cases are those corresponding to Boolean connectives and follow trivially from the induction hypothesis.


\paragraph{Every polylogarithmic time property can be expressed in index logic}
Suppose we are given a class $\cal C$ of ordered $\sigma$-structures, which can be decided by a deterministic polylogarithmic time direct-access Turing machine $M = (Q, \Sigma, \delta, q_0, F, \sigma)$, that has $m$ tapes, including ordinary work-tapes, address-tapes, (function) value-tapes and constant-tapes. We assume, w.l.o.g., that $F = \{q_a\}$ (i.e., there is only one accepting state), $|Q| = a+1$, and $Q = \{q_0, q_1, \ldots, q_{a}\}$. 

Let $M$ run in time $O(\lceil \log n \rceil^k)$. Note that, only small inputs (up to some fixed constant) may require more time than $\lceil \log n \rceil^k$. Those finite number of small input structures can be dealt separately, for each finite structure can be easily defined by an index logic sentence. Hence, from now on, we only consider those inputs for which $M$ runs in time $\lceil \log n \rceil^k$.  Using the order relation $\leq^{\bf A}$ of the ordered structure $\bf{A}$, we can define the lexicographic order $\leq^\mA_k$ for the $k$-tuples in $\mathit{Num}(\mA)^k$, and then use this order to model time and positions of the tape heads of $M$. Note that this can be done, since the number of $k$-tuples in $\mathit{Num}(\mA)^k$ is $\lceil \log n \rceil^k$.  In our proof, we use expressions of the form $\bar{t} \sim t'$, where $\bar{t}$ is a $k$-tuple of variables of sort $\bf n$ and $t'$ is a single variable also of sort $\bf n$, with the intended meaning that $\it{val}(\bar{t})$ is the $(\it{val}(t')+1)$-th tuple in the order $\leq^\mA_k$. This is clearly expressible in index logic, since it is a polynomial time property on the $\bf n$ sort.

Next we introduce, together with their intended meanings, the relations we use to encode the configurations of  polylogarithmic time direct-access Turing machines.
Consider:
\begin{itemize}
\item A $k$-ary relation $S_q$, for every state $q \in Q$, such that $S_q(\bar{t})$ holds iff $M$ is in state $q$ at time $\bar{t}$. 
\item $2k$-ary relations $T_i^0, T_i^1, T_i^\sqcup$, for every tape $i = 1, \ldots, m$, such that $T_i^s(\bar{p}, \bar{t})$ holds iff at the time $\bar{t}$ the cell $\bar{p}$ of the tape $i$ contains the symbol $s$. 
\item $2k$-ary relations $H_i$, for every tape $i = 1, \ldots, m$, such that $H_i(\bar{p}, \bar{t})$ holds iff at the time $\bar{t}$ the head of the tape $i$ is on the cell $\bar{p}$.
\end{itemize}

We show that these relations are definable in index logic by means of a simultaneous inflationary fixed point formula. The following sentence is satisfied by a structure $\bf A$ iff ${\bf A} \in {\cal C}$. The idea of the formula is that it uses the simultaneous fixed point operator to construct the whole computation of $M$ iteration by iteration, and states that there exists a time step in which $M$ accepts. We define the formula
\begin{equation*}\label{IFPformula}
\exists \mathtt{x}_0 \ldots \mathtt{x}_{k-1} \big([\textrm{S-IFP}_{
\bar{t}, S_{q_a}, \mathrm{A}, \mathrm{B}_1, \mathrm{B}_2, \mathrm{B}_3, \mathrm{C} 
} 
\;\varphi_{q_a}, \Phi_\mathrm{A}, \Phi_{\mathrm{B}_1}, \Phi_{\mathrm{B}_2}, \Phi_{\mathrm{B}_3}, \Phi_{\mathrm{C}}](\mathtt{x}_0, \ldots, \mathtt{x}_{k-1})\big),
\end{equation*}   
where
\[\mathrm{A} = \bar{t}, S_{q_0}, \ldots, \bar{t}, S_{q_{a-1}} \quad
 \mathrm{B}_1 = \bar{p}\, \bar{t}, T^0_1, \ldots, \bar{p}\, \bar{t}, T^0_m \quad
\mathrm{B}_2 = \bar{p}\, \bar{t}, T^1_1, \ldots, \bar{p}\, \bar{t}, T^1_m
\]
\[\mathrm{B}_3 = \bar{p}\, \bar{t}, T^\sqcup_1, \ldots, \bar{p}\, \bar{t}, T^\sqcup_m \quad
\mathrm{C} = \bar{p}\, \bar{t}, H_1, \ldots, \bar{p}\, \bar{t}, H_m
\]
\[\Phi_\mathrm{A} = \varphi_{q_0}, \ldots, \varphi_{q_{a-1}} \quad
\Phi_{\mathrm{B}_1} = \psi_{01}, \ldots, \psi_{0m} \quad 
\Phi_{\mathrm{B}_2} = \psi_{11}, \ldots, \psi_{1m}\]
\[\Phi_{\mathrm{B}_3} = \psi_{\sqcup1}, \ldots, \psi_{\sqcup m}\quad
\Phi_{\mathrm{C}} = \gamma_{1}, \ldots, \gamma_{m}.\]
Note that here $\bar{p}$ and $\bar{t}$ denote $k$-tuples of variables of sort $\bf n$.

The formula builds the required relations $S_{q_i}$, $T^0_i$, $T^1_i$, $T^\sqcup_i$ and $H_i$ (for $1 \leq i \leq m$) in stages, where the $j$-th stage represents the configuration at time steps up to $j-1$. The subformulae $\varphi_{q_i}$, $\psi_{0i}$, $\psi_{1i}$, $\psi_{\sqcup i}$ and $\gamma_i$ define $S_{q_i}$, $T^0_i$, $T^1_i$, $T^\sqcup_i$ and $H_i$, respectively.

To simplify the presentation of the subformulae and w.l.o.g., we assume that, in every non-initial state of a computation, each address-tape contains a single binary number between $0$ and $n-1$ and nothing else. This number has at most $\log n$ bits, and hence we encode positions of address-tapes (and function value-tapes) with a single variable of sort $\bf n$ (instead of a tuple of variables).

We will now give the idea how the formulae  $\varphi_{q_i}$, $\psi_{0i}$, $\psi_{1i}$, $\psi_{\sqcup i}$, and $\gamma_i$ are constructed from $M$. We first describe the construction of $\psi_{0i}$ in detail; the formulae $\psi_{1i}$ and $\psi_{\sqcup i}$ are constructed in a similar fashion.
The rough idea behind all the formulas is the following: the formulas encode directly the initial configuration of the computation, and for a non-initial time step, how the configuration at that time step is computed from the previous configuration. The formula $\psi_{0i}(\bar{p},\bar{t})$, for example, encodes whether the $i$-th tape at the cell position $\bar{p}$ at the time $\bar{t}$ contains the symbol $0$. If $i$ is an address-tape or an ordinary work-tape, then in the initial configuration of the computation, the tape $i$ contains the blank symbol $\sqcup$ on all its cells.
In this case, the formula $\psi_{0i}$ is of the form:
\[
\neg(\bar{t} \sim 0) \wedge \alpha^0_i(\bar{p}, \bar{t} - 1),
\]
where $\alpha^0_i(\bar{p}, \bar{t} - 1)$ list conditions under which at the following time instant, $\bar{t}$, the position $\bar{p}$ of the tape $i$ will contain $0$. In the more general case, the formula has the form $(\bar{t}\sim 0 \land \xi_{T^0_i}) \lor (\neg(\bar{t} \sim 0) \wedge \alpha^0_i(\bar{p}, \bar{t} - 1))$, where $\xi_{T^0_i}$ is used to encode the initial configuration related to the relation $T^0_i$.

We will next describe the construction of $\alpha^0_i(\bar{p}, \bar{t} - 1)$.
Suppose, $i$ refers to an address-tape or to an ordinary work-tape.
The formula $\alpha^0_i(\bar{p}, \bar{t} - 1)$ is a disjunction over all the possible reasons, for why at the time $\bar{t}$ the position $\bar{p}$ of tape $i$ contains the symbol $0$. There are two possibilities: (1) at the time $\bar{t}-1$ the head of the tape $i$ was not in the position $\bar{p}$ and the position $\bar{p}$ of the tape $i$ contained the symbol $0$, (2)  at the time $\bar{t}-1$ the head of the tape $i$ was in the position $\bar{p}$ and the head wrote the symbol $0$. Below, we display a disjunct of $\alpha^0_i(\bar{p}, \bar{t} - 1)$ that is due to a reason of the second kind by one possible transition $\delta_i(q, a_1, \ldots, a_m, b_1, \ldots, b_p)=(0, \rightarrow)$. The disjunct of $\alpha^0_i(\bar{p}, \bar{t} - 1)$, which takes care of this case is obtained from the following formula by substituting $\bar{p}_i$ with $\bar{p}$:

{
\centering
\begin{minipage}{.65\textwidth}
\begin{align*}
&\exists \bar{p}_1 \dots \bar{p}_{i-1}\bar{p}_{i+1} \dots \bar{p}_m \Big(S_q(\bar{t}-1) \wedge\\
&\big(\bigwedge_{1\leq j\leq m} H_j(\bar{p}_j, \bar{t}-1) \wedge T^{a_j}_j(\bar{p}_j, \bar{t}-1) \big) \wedge \\
&  \bigwedge_{1\leq l\leq p}  \exists x_1 \ldots x_{r_l} \big( \mathrm{check}(R_l(x_1, \ldots, x_{r_l}), b_l) \land \\
&\quad \bigwedge_{1\leq k \leq r_l} x_k = \mathit{index}\{\mathtt{x} \mid (T^1_{\tau^R_{l,k}}(\mathtt{x}, \bar{t}-1))\} \big) \Big),
\end{align*}
\end{minipage}
\begin{minipage}{.325\textwidth}
\emph{At time $\bar{t}-1$, $M$ is in the state $q$ and the head of the tape $j$ is in the position $\bar{p}_j$ reading $a_j$.}

\vspace{1mm}
\emph{At time $\bar{t}-1$, the tuple of values in the address-tapes of $R_l$ is in $R^\mA$ iff $b_l=1$.}

\end{minipage}
}

\vspace{1mm}
\noindent where $\tau^R_{l,1}, \ldots, \tau^R_{l,r_l}$ denote the $r_l$ address-tapes corresponding to the $r_l$-ary relation $R_l$, and  $\mathrm{check}(R_l(x_1, \ldots, x_{r_l}), b_l)$ is a shorthand for $R_l(x_1, \ldots, x_{r_l})$, if $b_l=1$, and a shorthand for $\neg R_l(x_1, \ldots, x_{r_l})$, if $b_l=0$.

Assume then that $i$ refers to a value-tape of a function $f_j$ of arity $k_j$, and let $\tau^f_{j,1}, \ldots, \tau^f_{j,k_j}$ refer to its address-tapes.
Recall that the contents of a value-tape of a function at a time $\bar{t}$ depends only on the contents of its address-tapes at the time $\bar{t}$. Below, we write $\psi_{0i}(p,\bar{t})$ using the contents of the related address-tapes at time $\bar{t}$. This is fine, for we do not introduce circularity of definitions (technically, we obtain the contents of the related address-tapes at time $\bar{t}$ using the corresponding formulas that define them from the configuration of the machine at time $\bar{t}-1$).
Now $\psi_{0i}(p,\bar{t})$ refers to the following formula:
\begin{align*}
\exists x_1 \ldots x_{k_j} \Big( \big( \bigwedge_{1\leq l \leq k_j}  x_l = \mathit{index}\{\mathtt{x} \mid T^1_{\tau^f_{j,l}}(\ttx, \bar{t})\} \big) \land 
\neg \bit(f_j(x_1, \ldots, x_{k_j}),p) \Big),
\end{align*}
where $\bit(f_j(x_1, \ldots, x_{k_j}),p)$ expresses that the bit of position $p$ of $f_j(x_1, \ldots, x_{k_j})$ in binary is $1$; we showed, in Section~\ref{binrepex}, how the bit predicate is expressed in index logic.

The formula $\varphi_{q_0}$ is of the form $\bar{t} \sim 0 \vee (\neg(\bar{t} \sim 0) \wedge \alpha_{q_0}(\bar{t}-1))$ and other $\varphi_{q}$'s are of the form $\neg (\bar{t} \sim 0) \wedge \alpha_{q}(\bar{t}-1)$, where $\alpha_{q}(\bar{t}-1)$ list conditions under which $M$ will enter state $q$ at the next time instant, $\bar{t}$. 

Finally, the formulae $\gamma_i$ are of the form
\[
(\bar{t} \sim 0 \wedge \bar{p} \sim 0 ) \vee \big(\neg(\bar{t} \sim 0) \wedge \alpha_i(\bar{p}, \bar{t} - 1)\big),
\]
where $\alpha_i(\bar{p}, \bar{t} - 1)$ list conditions under which, at the following time instant $\bar{t}$, the head of the tape $i$ will be in the position $\bar{p}$. 

We omit writing the remaining subformulae, since it is an easy but tedious task. It is also not difficult to see that in the $j$-th stage of the simultaneous inflationary fixed point computation, the relations $S_q$, $(T_i^0, T_i^1, T_i^\sqcup)_{1 \leq i \leq m}$ and $(H_i)_{1 \leq i \leq m}$ encode the configuration of $M$ for times $\leq j-1$, which completes our proof.
\end{proof}

\section{Definability in Deterministic PolylogTime}

We observe here that very simple properties of structures
are nondefinable in index logic. Moreover, we provide an answer to a fundamental
question on the primitivity of the built-in order predicate (on
terms of \thefirstsort) in
our logic.   Indeed, we are working with ordered structures, and
variables of \thefirstsort{} can only be introduced by binding
them to an index term.  Index terms are based on sets of bit
positions which can be compared as binary numbers.  Hence, it is
plausible to suggest that the built-in order predicate can be removed 
from our logic without losing expressive power.  We prove,
however, that this does not work in the presence of constant or
function symbols in the vocabulary.

\begin{proposition} \label{emptiness}
Assume that the vocabulary includes a unary relation symbol $P$.
Checking emptiness (or non-emptiness) of $P^\mA$ in a given
structure $\mA$ is not computable in $\dpolylog$.
\end{proposition}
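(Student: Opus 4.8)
The plan is to use a standard adversary (sensitivity) argument exploiting the fact that a $\dpolylog$ computation can inspect only a polylogarithmic number of bits of its input, whereas the encoding of $P^\mathbf{A}$ occupies a block of $n$ bits of $\enc(\mathbf{A})$ --- one bit per domain element. I would work with the random-access model, since the claim then reduces to the classical observation that a sublinear-time machine cannot detect a single planted bit; by Theorem~\ref{directrandom} this is no loss of generality. It suffices to treat non-emptiness, since deterministic time classes are closed under complementation and hence emptiness lies in $\dpolylog$ iff non-emptiness does.

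Suppose, for contradiction, that a random-access Turing machine $M$ decides emptiness of $P$ in time $O(\log^k n)$ for some fixed $k$. Fix a large $n$ and consider the structure $\mathbf{A}_0$ of domain size $n$ with $P^{\mathbf{A}_0}=\emptyset$, interpreting all other symbols of the vocabulary in some fixed arbitrary way. The run of $M$ on $\mathbf{A}_0$ is deterministic and of length $O(\log^k n)$, so $M$ writes at most $O(\log^k n)$ addresses to its index-tape and thereby reads at most $O(\log^k n)$ positions of $\enc(\mathbf{A}_0)$. Since the $P$-block has length $n$ and $n$ eventually exceeds $O(\log^k n)$, for every sufficiently large $n$ there is a position $j$ inside the $P$-block that $M$ never reads. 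I then let $\mathbf{A}_1$ agree with $\mathbf{A}_0$ everywhere except that the $j$-th bit of the $P$-block is set, so that $P^{\mathbf{A}_1}=\{a_j\}\neq\emptyset$, where $a_j$ is the element addressed by $j$.

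The core of the argument is to show, by induction on the number of computation steps, that $M$ passes through exactly the same sequence of configurations on $\mathbf{A}_0$ and on $\mathbf{A}_1$. The initial configurations coincide, as they do not depend on the contents of the input-tape. For the inductive step, the symbols read by all heads at a given step agree on the two runs: the only difference between $\enc(\mathbf{A}_0)$ and $\enc(\mathbf{A}_1)$ is at position $j$, and by the inductive hypothesis the two runs have so far coincided, so (by the choice of $j$) the input head has never been and is not now at position $j$. Hence the transition applied is the same, the configurations remain identical, and position $j$ continues to be unread. It follows that $M$ accepts $\mathbf{A}_0$ iff it accepts $\mathbf{A}_1$, although exactly one of them has an empty $P$. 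Thus $M$ errs on some structure of every sufficiently large size, contradicting the assumption.

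The main obstacle is precisely this induction: because the machine is adaptive, the set of positions it reads depends on the input, so one cannot fix a ``read set'' for $\mathbf{A}_1$ in advance. The coupling must establish simultaneously that the two runs stay identical \emph{and} that position $j$ is therefore never read on either input; once this is set up correctly the conclusion is immediate. The remaining point --- the counting estimate $O(\log^k n) < n$ guaranteeing an unread position $j$ --- is routine and holds for all large $n$.
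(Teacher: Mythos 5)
Your proof is correct and follows essentially the same adversary argument as the paper: run the machine on the all-zeros interpretation of $P$, use the polylogarithmic time bound to find an unread bit in the $P$-block, flip it, and observe that the two runs coincide. The only differences are cosmetic --- you handle a general vocabulary and make explicit the step-by-step coupling of the two runs, which the paper leaves implicit (it also restricts to the vocabulary $\{P\}$, which is without loss of generality).
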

\begin{proof}
We will show that emptiness is not computable in $\dpolylog$. For a contradiction, assume that it is. Consider first-order structures over the vocabulary $\{P\}$, where $P$ is a unary relation symbol. Let $M$ be some Turing machine that decides in $\dpolylog$, given a $\{P\}$-structure $\mA$, whether $P^\mA$ is empty. Let $f$ be a polylogarithmic function that bounds the running time of $M$. Let $n$ be a natural number such that $f(n)< n$.

Let $\mA_\emptyset$ be the $\{P\}$-structure with domain $\{0,\dots,n-1\}$, where $P^\mA=\emptyset$. The encoding of $\mA_\emptyset$ to the Turing machine $M$ is the sequence
\(
 s := \underbrace{0\dots 0}_{\text{$n$ times}}
\).
Note that the running time of $M$ with input $s$ is strictly less than $n$. This means that there must exist an index $i$ of $s$ that was not read in the computation $M(s)$. Define 
\[
s' := \underbrace{0\dots 0}_{\text{$i$ times}} 1 \underbrace{0\dots 0}_{\text{$n-i-1$ times}}.
\]
Clearly the output of the computations $M(s)$ and $M(s')$ are identical, which is a contradiction since $s'$ is an encoding of a $\{P\}$-structure where the interpretation of  $P$ is a singleton.
\end{proof}
The technique of the above proof can be adapted to prove a plethora of undefinability results, e.g., it can be shown that $k$-regularity of directed graphs cannot be decided in $\dpolylog$, for any fixed $k$.

We can develop this technique further to show that the order
predicate on terms of \thefirstsort{} is a primitive in the logic.
The proof of the following lemma is quite a bit more complicated though. 

\begin{lemma}\label{lemma:order}
Let $P$ and $Q$ be unary relation symbols. There does not exist an index logic formula $\varphi$ such that for all $\{P,Q\}$-structures $\mA$ such that $P^\mA$ and $Q^\mA$ are disjoint singleton sets $\{l\}$ and $\{m\}$, respectively, it holds that
\[
\mA, \val \models \varphi \text{ if and only if }  l < m.
\]
\end{lemma}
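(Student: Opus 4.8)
The plan is to extend the reading-bound (adversary) technique already used for Proposition~\ref{emptiness}. Suppose, toward a contradiction, that such a formula $\varphi$ exists. By the evaluation direction of Theorem~\ref{captureResult} (every index logic formula can be evaluated in $\dpolylog$), I would obtain a direct-access Turing machine $M$ that, on every $\{P,Q\}$-structure $\mA$, decides $\mA\models\varphi$ within some fixed polylogarithmic time bound $f$ in the domain size. In particular, on the promise structures---those where $P^\mA=\{l\}$ and $Q^\mA=\{m\}$ are disjoint singletons---$M$ outputs $1$ iff $l<m$. I would then fix $n$ large enough that $f(n)<n/2$ and restrict attention to domains of size $n$.

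The core idea is that $M$ issues at most $f(n)<n/2$ membership queries, and so cannot locate the unique element of $P^\mA$ nor of $Q^\mA$. Concretely, I would run $M$ on a \emph{virtual} answer profile in which every query ``$a\in P$?'' and ``$a\in Q$?'' is answered $0$, while all order queries ``$a\le b$?'' are answered by the standard order on $\{0,\dots,n-1\}$ and the constant tape holds $n$. Since the order answers and the value $n$ are identical for every promise structure of size $n$, they reveal nothing about $l$ or $m$. This deterministic run halts, having queried $P$ on some set $A_P$ and $Q$ on some set $A_Q$ of values, with $|A_P|,|A_Q|\le f(n)<n/2$.

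Next I would exhibit two genuine promise structures consistent with this virtual run but with opposite correct answers. As fewer than $n/2$ values are excluded on each side, the least $l_1\notin A_P$ satisfies $l_1\le f(n)$, and more than $n/2$ admissible values for $Q$ remain above it, so I may pick $m_1\notin A_Q$ with $l_1<m_1$ (hence $l_1\neq m_1$). Symmetrically, taking the least $m_2\notin A_Q$ and some $l_2>m_2$ with $l_2\notin A_P$ yields $l_2>m_2$. Let $\mA_1,\mA_2$ be the corresponding disjoint-singleton structures. A routine induction on the computation shows the run of $M$ on each $\mA_j$ coincides step-by-step with the virtual run: every $P$-query value lies in $A_P$ and hence differs from $l_j$ (returning $0$), every $Q$-query value lies in $A_Q$ and differs from $m_j$, and the order queries agree by construction. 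Thus $M$ gives the same output on $\mA_1$ and $\mA_2$, contradicting $l_1<m_1$ while $l_2>m_2$.

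The main obstacle, relative to Proposition~\ref{emptiness}, is that the all-$0$ answer profile does \emph{not} itself encode a legal promise structure (both $P$ and $Q$ would be empty), so one cannot merely flip a single unread bit to flip the answer. The real work lies in the combinatorial step above: producing two bona fide disjoint-singleton structures that are simultaneously indistinguishable from the virtual run yet order-opposite. This requires the crossing-pair counting together with the observation that the shared order information and the value $n$ leak nothing about the hidden positions $l$ and $m$.
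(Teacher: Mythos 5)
Your proof is correct and follows essentially the same adversary argument as the paper's: both exploit that a polylogarithmic-time machine can inspect only $o(n)$ input positions, and then exhibit two disjoint-singleton structures that are indistinguishable to the machine yet have their singletons in opposite order. The paper phrases this over the binary encoding with a random-access machine, building a nested chain of classes $\cC_i$ of structures consistent with all-zero answers, whereas you run a single virtual all-zero computation in the direct-access model and then pick the witnesses; the difference is cosmetic.
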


\begin{proof}
We will show that the property described above cannot be decided
in $\polylog$; the claim then follows from
Theorem~\ref{captureResult}.
For a contradiction, suppose that the property can be decided in $\polylog$, and let $M$ and $f:\N\rightarrow \N$ be the related random-access Turing machine and polylogarithmic function, respectively, such that, for all $\{P,Q\}$-structures $\mA$ that satisfy the conditions of the claim,  $M(\enc(\mA))$ decides the property in at most $f(\vert \enc(\mA)\rvert)$ steps. Let $k$ be a natural number such that $f(2k) < k-1$.

Consider a computation $M(s)$ of $M$ with an input string $s$. We say that an index $i$ is \emph{inspected} in the computation, if at some point during the computation $i$ is written in the index tape in binary. Let $\ins_M(s)$ denote the set of inspected indices of the computation of $M(s)$ and $\ins^j_M(s)$ denote the set of inspected indices during the first $j$ steps of the computation.
%
%
We say that $s$ and $t$ are \emph{$M$-$j$-equivalent} if the lengths of $t$ and $s$ are equal and $t[i]=s[i]$, for each $i\in \ins^j_M(s)$. We say that $\mA$ and $\mB$ are \emph{$M$-$j$-equivalent} whenever $\enc(\mA)$ and $\enc(\mB)$ are. Note that if two structures $\mA$ and $\mB$ are $M$-$j$-equivalent, then the computations $M(\enc(\mA))$ and $M(\enc(\mB))$ are at the same configuration after $j$ steps of computation. Hence if $\mA$ and $\mB$ are M-$f(\vert \enc(\mA) \rvert)$-equivalent, then outputs of $M(\mA)$ and $M(\mB)$ are identical.

Let $\cC$ be the class of all $\{P,Q\}$-structures $\mA$ of domain $\{0,\dots k-1\}$, for which $P^\mA$ and $Q^\mA$ are disjoint singleton sets. The encodings of these structures are bit strings of the form $b_1\dots b_k c_1\dots c_k$, where exactly one $b_i$ and one $c_j$, $i\neq j$, is $1$.
The computation of $M(\enc(\mA))$ takes at most $f(2k)$ steps.

We will next construct a subclass $\cC^*$ of $\cC$ that consists of exactly those structures $\mA$ in $\cC$ for which the indices in $\ins(\enc(\mA))$ hold only the bit $0$. We present an inductive process that will in the end produce $\cC^*$. Each step $i$ of this process produces a subclass $\cC_i$ of $\cC$ for which the following hold: 
\begin{enumerate}
\item[a)] The structures in $\cC_i$ are $M$-$i$-equivalent.
\item[b)] There exists $\mA_i \in \cC_i$ and
\[
\cC_i = \{ \mB \in \cC \mid \forall j\in \ins^i(\enc(\mA_i)) \text{ the $j$th bit of $\enc(\mB)$ is $0$}\}.
\]
\end{enumerate}
Define $\cC_0 := \cC$; clearly $\cC_0$ satisfies the properties above. For $i < f(2k)$, we define $\cC_{i+1}$ to be the subclass of  $\cC_{i}$ consisting of those structures $\mA$ that on time step $i+1$ inspects an index that holds the bit $0$.\footnote{If the machine already halted on an earlier time step $t$, we stipulate that the machine inspects on time step  $i+1$ the same index that it inspected on time step $t$.}

Assume that a) and b) hold for $\cC_{i}$, we will show that the same holds for $\cC_{i+1}$. Proof of a): Let $\mA,\mB\in \cC_{i+1}$. By construction and by the induction hypothesis, $\mA$ and $\mB$ are $M$-$i$-equivalent, and on step $i+1$ $M(\enc(\mA))$ and $M(\enc(\mB))$ inspect the same index that holds $0$. Thus $\mA$ and $\mB$ are $M$-$(i+1)$-equivalent. Proof of b): It suffices to show that $\cC_{i+1}$ is nonempty; the claim then follows by construction and the property b) of $\cC_i$. By the induction hypothesis, there is a structure $\mA_i \in \cC_i$. Let $j$ be the index that $M(\enc(\mA_i))$ inspects on step $i+1$. Since $i+1\leq f(2k) < k-1$, there exists a structure $\mA_i'\in \cC_i$  such that the $j$th bit of $\enc(\mA_i')$ is $0$. Clearly $\mA'_i \in \cC_{i+1}$.

Consider the class $\cC_{k-2}$ (this will be our $\cC^*$) and $\mB\in \cC_{k-2}$ and recall that $\enc(\mB)$ is of the form $b_1\dots b_k c_1\dots c_k$. Since $\lvert \ins^{k-2}(\mB) \rvert \leq k-2$, there exists two distinct indices $i$ and $j$, $0\leq i < j\leq k-1$, such that $i,j,i+k,j+k\notin \ins^{k-2}(\enc(\mA))$. Let $\mB_{P < Q}$ denote the structure such that $\enc(\mB_{P < Q})$ is a bit string where the $i$th and $j+k$th bits are $1$ and all other bits are $0$. Similarly, let $\mB_{Q < P}$ denote the structure such that $\enc(\mB_{Q < P})$ is a bit string where the $j$th and $i+k$th bits are $1$ and all other bits are $0$. Clearly the structures $\mB_{P < Q}$ and $\mB_{Q < P}$ are in $\cC_{k-2}$ and $M$-$(k-2)$-equivalent. Since $(k-2)$ bounds above the length of computations of $M(\enc(\mB_{P < Q}))$ and $M(\enc(\mB_{Q < P}))$, it follows that the outputs of the computations are identical. This is a contradiction, for $\mB_{P < Q}$ and $\mB_{Q < P}$ are such that $M$ should accept the first and reject the second. 
\end{proof}

\begin{theorem} \label{primitive}
Let $c$ and $d$ be constant symbols in a vocabulary $\sigma$.
There does not exist an index logic formula $\varphi$ that does
not use the order predicate $\leq$ on terms of \thefirstsort{}
and that is equivalent with the formula $c\leq d$.
\end{theorem}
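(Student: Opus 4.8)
The plan is to reduce Theorem~\ref{primitive} to Lemma~\ref{lemma:order} by a syntactic translation of formulas. It is enough to prove the claim for the vocabulary $\sigma=\{c,d\}$ consisting of the two constants alone, since this already witnesses that the order predicate on terms of \thefirstsort{} cannot be removed; it is also the decisive case, because with no relation or function symbols of \thefirstsort{} the logic has the fewest tools at its disposal. The key observation is a syntactic normal form: over $\sigma=\{c,d\}$ the only terms of \thefirstsort{} are variables and the constants $c,d$, and by inspecting the grammar of Definition~\ref{syntax} one sees that a term of \thefirstsort{} can occur only as the left-hand side of an index atom $t=\idx\{\ttx:\beta\}$. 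Indeed, the atoms $t_1\leq t_2$ are forbidden by hypothesis, and there are no relation atoms $R(t_1,\dots,t_k)$ or function applications into which such a term could be fed. Hence every occurrence of $c$ and of $d$ sits inside a subformula $c=\idx\{\ttx:\beta\}$ or $d=\idx\{\ttx:\beta\}$.

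First I would define a compositional translation $\psi\mapsto\psi^{\ast}$ from order-free $\{c,d\}$-formulas to formulas over the vocabulary $\{P,Q\}$ of Lemma~\ref{lemma:order}. The translation is the identity on Boolean connectives, on $\exists\ttx$, on the fixed-point operators, and on all atoms of \thesecondsort{} (none of which mention $c$ or $d$). On an index atom whose left-hand side is a variable $x$ of \thefirstsort{} it preserves the shape, setting $(x=\idx\{\ttx:\beta\})^{\ast}:=x=\idx\{\ttx:\beta^{\ast}\}$, and it sends $\exists x(x=\idx\{\ttx:\alpha\}\wedge\varphi)$ to $\exists x(x=\idx\{\ttx:\alpha^{\ast}\}\wedge\varphi^{\ast})$. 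The crucial clauses replace the constants by the (unique) witnesses of $P$ and $Q$, using a fresh variable $z$ of \thefirstsort:
\[
(c=\idx\{\ttx:\beta\})^{\ast}:=\exists z\,(z=\idx\{\ttx:\beta^{\ast}\}\wedge P(z)),
\]
and symmetrically $(d=\idx\{\ttx:\beta\})^{\ast}:=\exists z\,(z=\idx\{\ttx:\beta^{\ast}\}\wedge Q(z))$. Note that $\psi^{\ast}$ is itself a legitimate index logic formula and does not introduce the order on \thefirstsort.

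Next I would prove, by induction on formulas, the following correspondence. Fix a domain $\{0,\dots,n-1\}$ and elements $l\neq m$; let $\mA$ be the $\{c,d\}$-structure with $c^{\mA}=l$ and $d^{\mA}=m$, and let $\mB$ be the $\{P,Q\}$-structure on the same domain with $P^{\mB}=\{l\}$ and $Q^{\mB}=\{m\}$. Then for every order-free $\psi$ and every valuation $\val$ we have $\mA,\val\models\psi$ iff $\mB,\val\models\psi^{\ast}$. The only nontrivial step is the constant index atom. Unwinding Definition~\ref{semanticsIndexLogic}, $\mA,\val\models c=\idx\{\ttx:\beta\}$ holds exactly when the bit-pattern defined by $\beta$ equals $\enc(l)$; on the other side, $\mB,\val\models\exists z(z=\idx\{\ttx:\beta^{\ast}\}\wedge P(z))$ holds exactly when the pattern defined by $\beta^{\ast}$ names a domain element lying in $P^{\mB}=\{l\}$, that is, when the pattern equals $\enc(l)$. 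Here I would treat with care the boundary case in which the defined pattern represents a number $\geq n$: in the original atom it then cannot equal $\enc(l)$ (since $l<n$) and the atom is false, while in the translation no witness $z\in A$ exists and the existential is likewise false, so the two sides still agree. By the induction hypothesis $\beta$ and $\beta^{\ast}$ define the same pattern under the shared valuation, whence the two atoms have equal truth value; the fact that $P^{\mB}$ is a \emph{singleton} is exactly what turns ``some element of $P$ has these bits'' into ``$l$ has these bits''.

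Finally I would close the argument. Suppose for contradiction that some order-free index logic formula $\varphi$ over $\{c,d\}$ is equivalent to $c\leq d$. Applying the correspondence to the structures $\mA,\mB$ above, where $l\neq m$ so that $P^{\mB}$ and $Q^{\mB}$ are disjoint singletons, yields $\mB\models\varphi^{\ast}$ iff $\mA\models\varphi$ iff $l\leq m$ iff $l<m$. Thus the single formula $\varphi^{\ast}$ would decide $l<m$ on every $\{P,Q\}$-structure whose interpretations of $P$ and $Q$ are disjoint singletons, contradicting Lemma~\ref{lemma:order}. I expect the main obstacle to be the normal-form observation together with the verification of the constant-atom clause of the correspondence, and in particular the boundary behaviour when the defined index exceeds the domain size; once these are settled the reduction itself is routine.
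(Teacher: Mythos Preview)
Your proposal is correct and follows essentially the same route as the paper: reduce to Lemma~\ref{lemma:order} via a syntactic translation that replaces each atom $c=\idx\{\ttx:\beta\}$ by $\exists z\,(z=\idx\{\ttx:\beta^{\ast}\}\wedge P(z))$ (and symmetrically for $d$), then verify the semantic correspondence on structures where $P,Q$ are disjoint singletons. Your normal-form observation that over $\{c,d\}$ without $\leq$ on \thefirstsort{} the constants can occur only on the left of an index atom is a clean sharpening; the paper's translation also lists cases for $c=d$ and $c=x$, but since equality on \thefirstsort{} is merely shorthand for $\leq$, those cases are vacuous under the hypothesis, so your shorter case analysis loses nothing.
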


\begin{proof}
For the sake of a contradiction, assume that $\varphi$ is such a formula. We will derive a contradiction with Lemma \ref{lemma:order}. Without loss of generality, we may assume that the only symbols of $\sigma$ that occur in $\varphi$ are $c$ and $d$, and that $\varphi$ is a sentence (i.e., $\varphi$ has no free variables).

We define the translation $\varphi^*$ of $\varphi$ inductively. In addition to the cases below, we also have the cases where the roles of $c$ and $d$ are swapped.
\begin{itemize}
\item For $\psi$ that does not include $c$ or $d$, let $\psi^*:=\psi$.
\item For Boolean connectives and quantifiers the translation is homomorphic.
\item For $\psi$ of the form  $\IFPx{\theta}{\bar y}$, let $\psi^* := \IFPx{\theta^*}{\bar y}$.
\item For $\psi$ of the form  $c=d$, let $\psi^* := \bot$.\footnote{By $\bot$ we denote some formula that is always false, e.g, $\exists {\tt x}\, {\tt x}\neq {\tt x}$.}
\item For $\psi$ of the form $c = x$ or $x = c$, let
$\psi^* := C(x)$. 
\item For  $\psi$ of the form $x = \idx\{ \tt{x} : \theta(\tt{x}) \}$, define $\psi^*$ as
\(
x = \idx\{ {\tt x} : \theta^*( {\tt x}) \}.
\)
\item For  $\psi$ of the form $c = \idx\{ \tt{x} : \theta(\tt{x}) \}$, let
\[
\psi^* := \exists z   (z= \idx\{ {\tt x} : \theta^*( {\tt x}) \} \land C(z)),
\]
where $z$ is a fresh variable.
\end{itemize}
If $\mA$ is a $\{C,D\}$-structure such that $C^\mA$ and $D^\mA$ are disjoint singleton sets, we denote by $\mA'$ the $\{c,d\}$-structure with the same domain such that $\{c^{\mA'}\} = C^\mA$ and  $\{d^{\mA'}\} = D^\mA$. We claim that for every $\{C,D\}$-structure $\mA$  such that $C^\mA$ and $D^\mA$ are disjoint singleton sets $\{l\}$ and $\{m\}$ and every valuation $\val$ the following holds:
\[
l < m  \quad\Leftrightarrow\quad c^{\mA'} <  d^{\mA'} \quad\Leftrightarrow\quad \mA',\val \models \varphi \quad\Leftrightarrow\quad \mA,\val \models \varphi^*.
\]
This is a contradiction with Lemma \ref{lemma:order}. It suffices to proof the last equivalence as the first two are reformulations of our assumptions.
The proof is by induction on the structure of $\varphi$. The cases that do not involve the constants $c$ and $d$ are immediate. Note that by assumption, $c^\mA$ and $d^\mA$ are never equal and thus the subformula $c=d$ is equivalent to $\bot$. The case $c=x$ is also easy:
\[
\mA',\val \models c=x \quad\Leftrightarrow\quad \val(x)=c^{\mA'} \quad\Leftrightarrow\quad \val(x)\in C^\mA \quad\Leftrightarrow\quad \mA,\val \models C(x).
\]
The case for $c = \idx\{x : \theta(x)\}$ is similar:
\begin{align*}
\mA',\val \models c = \idx\{x : \theta(x)\} &\quad\Leftrightarrow\quad \mA',\val \models \exists z (z = \idx\{x : \theta(x)\} \land c=z) \\
&\quad\Leftrightarrow\quad \mA,\val \models \exists z (z = \idx\{x : \theta(x)\} \land C(z)). 
\end{align*}
All other cases are homomorphic and thus straightforward.
\end{proof}

We conclude this section by affirming that, on purely relational
vocabularies, the order predicate on \thefirstsort{} is
redundant.  The intuition for this result was given in the
beginning of this section.

\begin{theorem} \label{redundancy}
Let $\sigma$ be a vocabulary without constant or function symbols. For every sentence $\varphi$ of index logic of vocabulary $\sigma$ there exists an equivalent sentence $\varphi'$ that does not use the order predicate on terms of \thefirstsort.
\end{theorem}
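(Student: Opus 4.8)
The plan is to show how to eliminate every occurrence of the order predicate $t_1 \leq t_2$ on terms of \thefirstsort{} by replacing it with an equivalent subformula that compares the \emph{addresses} of the terms rather than the terms themselves. The key observation, already hinted at in the introduction and in the intuition at the start of this section, is that in a purely relational vocabulary every variable of \thefirstsort{} is introduced through the indexing mechanism $\exists x (x = \idx\{\ttx : \alpha(\ttx)\} \land \ldots)$. This means that, at the point where $x$ is bound, we have direct access to a numeric-sort formula $\alpha$ that defines exactly the set of bit positions where $\it{val}(x)$, written in binary, has a $1$. Since the built-in order on the domain coincides with the numeric order on these binary representations (recall $A = \{0,\dots,n-1\}$), comparing two elements of \thefirstsort{} can be reduced to lexicographic comparison of their bit-defining numeric formulas, which is expressible using only the order on \thesecondsort{} and quantification over \thesecondsort.

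First I would make precise, via a syntactic normal form, that in a constant- and function-free vocabulary the only terms of \thefirstsort{} are variables, and every free or bound variable of \thefirstsort{} carries with it an associated numeric ``bit formula.'' Concretely, I would maintain during the induction, for each variable $x$ of \thefirstsort{} in scope, a numeric formula $\beta_x(\ttx)$ (with one free variable $\ttx$ of \thesecondsort) such that $x = \idx\{\ttx : \beta_x(\ttx)\}$ holds under the relevant valuation; this $\beta_x$ is simply the defining formula $\alpha$ from the binding $\exists x(x=\idx\{\ttx:\alpha(\ttx)\}\land\dots)$. With such bit formulas available, an atom $x \leq y$ is rewritten as a formula asserting that the binary number defined by $\beta_x$ is at most the binary number defined by $\beta_y$. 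The standard lexicographic test is expressible on \thesecondsort: there is no higher bit position at which $\beta_x$ is set and $\beta_y$ is unset while all strictly higher positions agree; formally one writes $\forall\ttz\big(\beta_x(\ttz)\land\neg\beta_y(\ttz)\to\exists\ttz'(\ttz<\ttz'\land\beta_y(\ttz')\land\neg\beta_x(\ttz')\land\forall\ttz''(\ttz'<\ttz''\to(\beta_x(\ttz'')\leftrightarrow\beta_y(\ttz''))))\big)$ or a cleaner equivalent. Crucially this replacement formula uses only order on \thesecondsort, Boolean connectives, and numeric quantification, all of which are permitted.

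The induction then proceeds over the structure of $\varphi$. For atoms $R(\bar x)$, $X(\bar\ttx)$, numeric order $\ttx_1\leq\ttx_2$, and the index atoms themselves, nothing needs to change except to carry the bit formulas along. For the binding construct $\exists x(x=\idx\{\ttx:\alpha(\ttx)\}\land\psi)$ we record $\beta_x := \alpha'$ (the translated $\alpha$) in the environment before translating $\psi$. The Boolean connectives, the numeric existential $\exists\ttx$, and the fixed-point operator $[\mathrm{IFP}_{\bar\ttx,X}\theta]\bar\tty$ are handled homomorphically, since none of them introduce \thefirstsort{} order atoms directly and the fixed-point body lives entirely on \thesecondsort.

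The main obstacle I expect is a scoping and well-definedness issue in maintaining the bit formulas: the defining formula $\alpha$ for a bound variable $x$ may itself mention other \thefirstsort{} variables in scope (through atoms $R(\dots)$), so $\beta_x$ is not purely numeric in general, and the naive lexicographic comparison formula would need $\beta_x,\beta_y$ to be evaluable relative to a fixed valuation of the surrounding \thefirstsort{} variables. I would address this by substituting, for each such surrounding variable, its own bit formula, thereby ``inlining'' the entire chain of index definitions down to purely numeric formulas with the nesting of $\idx$ operators preserved; since the vocabulary has no functions or constants, this substitution terminates and stays within index logic. Verifying that this inlining correctly preserves the defining property $x=\idx\{\ttx:\beta_x(\ttx)\}$ under every valuation, and that the resulting lexicographic comparison indeed agrees with $\leq^\mA$ on the domain, is the technical heart of the argument; once it is established, correctness of $\varphi'$ follows by a routine induction showing $\mA,\val\models\varphi$ iff $\mA,\val\models\varphi'$ for every $\val$.
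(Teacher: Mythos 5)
Your proposal follows essentially the same route as the paper's proof: associate with each variable $x$ of \thefirstsort{} the unique numeric formula $\alpha_x(\ttx)$ from its binding $\exists x\,(x=\idx\{\ttx:\alpha_x(\ttx)\}\land\dots)$, and replace each order atom $x\leq y$ by a bit-position (lexicographic) comparison of $\alpha_x$ and $\alpha_y$ expressed purely on \thesecondsort, with all remaining cases handled homomorphically. The scoping issue you flag is dealt with in the paper simply by normalizing so that each variable is quantified exactly once and carrying the (translated) defining formulas through the induction, rather than by a separate inlining step.
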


\begin{proof}
We will define the translation $\varphi'$ of $\varphi$ inductively. 
Without loss of generality, we may assume that each variable that occurs in $\varphi$ is quantified exactly once (for this purpose, we stipulate that the variable $\ttx$ is quantified by the term $\idx\{ \tt{x} : \alpha(\tt{x})\}$). For every variable $x$ of \thefirstsort{} that occurs in $\varphi$, let $\alpha_x({\tt x})$ denote the unique subformula such that $\exists x (x=\idx\{{\tt x} : \alpha_x({\tt x})\} \land \psi)$ is a subformula of $\varphi$ for some $\psi$.
Note that
${\tt x}$ occurs only in $\idx\{{\tt x} : \alpha_x({\tt x})\}$. We define the following shorthands for variables ${\tt x}$ and ${\tt y}$ of \thesecondsort:
\begin{align*}
\varphi_{{\tt x} ={\tt y}} (\psi({\tt x}), \theta({\tt y})) &:= \forall {\tt z} \big( \psi({\tt z}/ {\tt x}) \leftrightarrow \theta({\tt z}/ {\tt y})\big), 
\\
\varphi_{{\tt x} < {\tt y}}  (\psi({\tt x}), \theta({\tt y})) &:=  \exists {\tt z} \Big(\big( \neg \psi({\tt z}/ {\tt x}) \land  \theta({\tt z}/ {\tt y}) \big) \land  \forall {\tt z}'  \Big({\tt z} < {\tt z}' \rightarrow  \big(  \psi({\tt z}'/ {\tt x}) \leftrightarrow \theta({\tt z}'/ {\tt y}) \big) \Big)\Big),
\end{align*}
where ${\tt z}$ and ${\tt z}'$ are fresh distinct variables of \thesecondsort. In the formulas above,  $\psi({\tt z}/ {\tt x})$ denotes the formula that is obtained from $\psi$ by substituting each free occurrence of $\ttx$ in $\psi$ by $\ttz$.
The translation $\varphi\mapsto \varphi'$ is defined as follows:
\begin{itemize}
\item For formulae that do not include variables of \thefirstsort{}, the translation is the identity.
\item For Boolean connectives and quantifiers of \thesecondsort{}, the translation is homomorphic.
\item For $\psi$ of the form  $\IFPx{\theta}{\bar y}$, let $\psi' := \IFPx{\theta'}{\bar y}$.
\item For $\psi$ of the form $x\leq y$, let  $\psi' := \varphi_{{\tt x} ={\tt y}}(\alpha_x({\tt x}), \alpha_y({\tt y})) \lor  \varphi_{{\tt x} < {\tt y}}(\alpha_x({\tt x}), \alpha_y({\tt y}))$.
\item For  $\psi$ of the form $x = \idx\{ {\tt y} : \theta({ \tt y}) \}$, define $\psi' := \varphi_{{\tt x} ={\tt y}}(\alpha_x({\tt x}), \theta({\tt y}))$.
\item For  $\psi$ of the form $\exists x (x = \idx\{ \tt{x} : \alpha(\tt{x}) \land \theta \}$, define $\psi' := \theta'$.
\end{itemize}
By a straightforward inductive argument it can be verified
that the translation preserves equivalence.
\end{proof}

\section{Index logic with partial fixed points}{\label{sec:nifpplog}}

 In this section, we introduce a variant of index logic defined in Section~\ref{sec:ifpplog}. This logic, which we denote as IL(PFP), is defined by simply replacing the inflationary fixed point operator IFP in the definition of index logic by the partial fixed point operator PFP. We stick to the standard semantics of the PFP operator.
We define that 
\[ 
\mathbf{A},\mathit{val} \models [\mathrm{PFP}_{\bar{\mathtt{x}}, X} \varphi]\bar{\tty} \text{ iff }  \mathit{val}(\bar{\tty}) \in \mathrm{pfp}(F^{\bf A, \it{val}}_{\varphi, \bar{\mathtt{x}}, X}),
\]  
where $\mathrm{pfp}(F^{\bf A, \it{val}}_{\varphi, \bar{\mathtt{x}}, X})$ denotes the \emph{partial} fixed point of the operator $F^{\bf A, \it{val}}_{\varphi, \bar{\mathtt{x}}, X}$ (see the description above Definition~\ref{semanticsIndexLogic}).
The \emph{partial fixed point} $\mathrm{pfp}(F)$ of an operator $F : {\cal P}(B) \rightarrow {\cal P}(B)$ is defined as the fixed point of $F$ obtained from the sequence $(S^i)_{i\in\mathbb{N}}$, where $S^0 := \emptyset$ and $S^{i+1} := F(S^i)$, if such a fixed point exists. If such a fixed point does not exist, then  $\mathrm{pfp}(F) := \emptyset$.
%

It is well known that first-order logic extended with partial fixed point operators captures $\pspace$.
As a counterpart for this result, we will show that IL(PFP) captures the complexity class polylogarithmic space ($\polylogspace$).
Recall that in IL(PFP) the relation variables bounded by the PFP operators range over (tuples of) $\Num(\mA)$, where $\mA$ is the interpreting structure. Thus, the maximum number of iterations before reaching a fixed point (or concluding that it does not exist), is \emph{not} exponential in the size $n$ of $\mA$, as in FO(PFP).
Instead, it is \emph{quasi-polynomial}, i.e., of size $O(2^{\log^k n})$, for some constant $k$.
This observation is, in part, the reason why IL(PFP) characterizes $\polylogspace$.
Finally, by an analogous argument that proves the well-known relationship $\mathrm{PSPACE} \subseteq \mathrm{DTIME}(2^{n^{O(1)}})$, it follows that $\polylogspace \subseteq \mathrm{DTIME}(2^{\log^{O(1)} n})$.

\subsection{The Complexity Class $\polylogspace$}

Let $L(M)$ denote the class of structures of a given signature $\sigma$ accepted by
 a direct-access Turing machine $M$. We say that $L(M) \in \mathrm{DSPACE}[f(n)]$ if $M$ visits at
most $O(f(n))$ cells in each work-tape before accepting or rejecting an input
structure whose domain is of size $n$.
We define the class of all languages decidable by a deterministic
direct-access Turing machines in \emph{polylogarithmic space} as
follows:
\begin{equation*}
\polylogspace := \bigcup_{k \in \mathbb{N}}
\mathrm{DSPACE}[(\left\lceil\log n \right\rceil)^k].
\end{equation*}
Note that it is equivalent whether we define the class $\mathrm{PolylogSpace}$ by means of direct-access Turing machines or random-access Turing machines. Indeed, by Theorem~\ref{directrandom} and by the fact that the (standard) binary encoding of a structure $\mA$ is of size polynomial with respect to the cardinality of its domain $A$, the following corollary is immediate.

\begin{corollary}	\label{tba}
A class of finite ordered structures $\cal C$ of some fixed vocabulary $\sigma$ is decidable by a random-access Turing machine working in $\polylogspace$ with respect to $\hat{n}$,
where $\hat{n}$ is the size of the binary encoding of the input structure, iff $\cal C$ is decidable by a direct-access Turing machine in $\polylogspace$ with respect to $n$, where $n$ is the size of the domain of the input structure.
\end{corollary}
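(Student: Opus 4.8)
The plan is to re-use the two simulations constructed in the proof of Theorem~\ref{directrandom}, this time accounting for the space they consume rather than their running time. It is important to note that Theorem~\ref{directrandom} cannot be invoked as a black box here: it bounds the \emph{time} of the simulations, whereas a polylogarithmic-space computation may run for quasi-polynomial time, so the time bounds do not transfer. What does carry over is the \emph{structure} of the two simulations, and the whole task reduces to checking that each of them is also space-bounded by $O(\lceil\log n\rceil^{O(1)})$. The change of parameter is harmless and can be dispatched first, exactly as in Section~\ref{sec:random-access}: since $\en$ is polynomial in $n$, we have $\log \en \in O(\lceil\log n\rceil)$, whence $\mathrm{DSPACE}[\log^k \en] = \mathrm{DSPACE}[\log^k n]$ for every fixed $k$. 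Thus it is immaterial whether the space bound of the random-access machine is measured against $\en$ or against $n$, and it suffices to establish the model equivalence for polylogarithmic space.

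First I would treat the direction in which a direct-access machine $M_d$ simulates a random-access machine $M_r$ that works in space $O(\log^k \en)$. Exactly as before, $M_d$ keeps a private copy of each work-tape and of the index-tape of $M_r$; these copies hold the same contents as the tapes of $M_r$ and hence never occupy more than $O(\log^k \en) = O(\log^k n)$ cells. The only additional tapes are the bookkeeping tapes that, after each simulated step, recompute from the number stored on the index-tape (a quantity bounded by $\en$, hence of $O(\lceil\log n\rceil)$ bits) which position of $\enc(\mA)$ that number addresses, and then supply the corresponding bit to $M_r$ by issuing a relation query or a function lookup on $M_d$'s address-tapes. Since all of these offset computations are arithmetic on numbers bounded by $\en = n^{O(1)}$, they use only $O(\lceil\log n\rceil^{O(1)})$ space, and the overall space remains polylogarithmic in $n$.

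For the converse direction, a random-access machine $M_r$ simulates a direct-access machine $M_d$ working in space $O(\log^k n)$. Following the proof of Theorem~\ref{directrandom}, $M_r$ dedicates a work-tape to every tape of $M_d$ together with one extra tape per relation symbol to hold the current query answer; the copies of the work-, address- and value-tapes stay within $O(\log^k n)$ cells and the answer-tapes hold a single bit each. The remaining ingredients are the initialization (recovering $n$ by binary search in the interval $[0,\en]$, using the offset function from \eqref{eq:one}, and copying each constant) and, after every step, the computation of the encoding offset of a queried tuple or of each of the $\lceil\log n\rceil$ bits of a function value; all of these manipulate only numbers bounded by $\en$ and therefore keep every auxiliary tape within $O(\lceil\log n\rceil^{O(1)})$ cells. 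Hence $M_r$ runs in space $O(\log^k n)$, which by the parameter identity above is polylogarithmic in $\en$ as required.

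The part I expect to demand the most care is precisely the one that forbids quoting Theorem~\ref{directrandom} directly: I must verify that none of the auxiliary constructions --- the binary search for $n$, the offset functions derived from the encoding length, and the per-bit reading of function values --- ever stores a number larger than $\en$ or materialises a non-polylogarithmic portion of the input. Once this bookkeeping is confirmed to live in $O(\lceil\log n\rceil^{O(1)})$ space, the two space-bounded simulations together with the identity $\mathrm{DSPACE}[\log^k \en] = \mathrm{DSPACE}[\log^k n]$ give both inclusions, and the stated equivalence follows.
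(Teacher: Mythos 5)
Your proposal is correct and follows essentially the same route as the paper, which simply declares the corollary immediate from Theorem~\ref{directrandom} together with the fact that $\hat{n}$ is polynomial in $n$ (so that $\log\hat{n}\in O(\lceil\log n\rceil)$), the implicit point being that the two simulations in that proof are step-by-step and use only polylogarithmic auxiliary space. You make this implicit point explicit --- correctly observing that the time bound of Theorem~\ref{directrandom} does not transfer as a black box and that one must instead re-inspect the space consumed by the tape copies, the binary search for $n$, and the offset arithmetic --- which is a faithful, somewhat more careful rendering of the paper's intended argument.
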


Moreover, in the context of $\polylogspace$, there is no need for random-access address-tape for the input; $\polylogspace$ defined with random-access Turing machines coincide with $\polylogspace$ defined with (standard) Turing machines that have sequential access to the input.

\begin{proposition}
A class of finite ordered structures $\cal C$ of some fixed vocabulary $\sigma$ is decidable by a random-access Turing machine working in $\polylogspace$ with respect to $\hat{n}$ iff $\cal C$ is decidable by a standard (sequential-access) Turing machine in $\polylogspace$ with respect to  $\hat{n}$, where $\hat{n}$ is the size of the binary encoding of the input structure.
\end{proposition}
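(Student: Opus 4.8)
The plan is to prove the two directions separately; the implication from sequential to random-access is routine, while the converse is the one that carries content and hinges on the fact that in a space-bounded setting one may freely trade time for the ability to address the input directly.

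For the direction from sequential to random-access, a random-access machine $M'$ simulates a standard machine $M$ by keeping, on a dedicated work-tape, a binary counter that records the current position of $M$'s sequential input head. Whenever $M$ would read the bit under its input head, $M'$ copies this counter to its address-tape and reads the corresponding input bit directly; whenever $M$ moves its input head left or right, $M'$ increments or decrements the counter. The counter has at most $\lceil \log \hat n\rceil$ bits, so the simulation uses only $O(\log \hat n)$ space beyond the space used by $M$, and the total space remains polylogarithmic in $\hat n$.

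For the converse, let $M$ be a random-access machine deciding $\cal C$ in space $O(\log^k \hat n)$. I would build a standard machine $M'$ that maintains faithful copies of $M$'s address-tape and work-tapes and, in addition, keeps a binary counter $p$ holding the current position of its own sequential input head. To simulate a step of $M$ in which the input head reads the bit addressed by the number $a$ currently on the (simulated) address-tape, $M'$ compares $a$ with $p$ and then walks its sequential input head one cell at a time, incrementing or decrementing $p$ accordingly, until $p=a$ (or the end-marker is reached when $a\ge\hat n$). It then reads the bit under the input head and proceeds exactly as $M$ would. This walking consumes time but only the $O(\log \hat n)$ space needed to store the counters $a$ and $p$, so the overall space usage of $M'$ stays $O(\log^k \hat n)$, i.e., polylogarithmic.

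The main obstacle is not mathematical depth but the single observation that justifies the whole argument: walking the input head back and forth costs time, yet the position counters require only logarithmically many bits, so the simulation stays within polylogarithmic space. Two points must be checked along the way. First, \emph{termination}: since $M$ decides $\cal C$ it halts on every input, and each step of $M$ is simulated by finitely many steps of $M'$, so $M'$ halts as well. Second, \emph{end-marker behaviour}: when $a\ge\hat n$, $M'$ must reproduce $M$'s convention of scanning the special end-marker $\triangleleft$, which it detects upon reaching the first blank cell past the input. With these observations in place, the two simulations establish the claimed equivalence, and since both preserve space up to an additive $O(\log \hat n)$ term, the classes $\polylogspace$ defined by the two input-access models coincide.
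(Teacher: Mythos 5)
Your proposal is correct and follows essentially the same route as the paper: in one direction the random-access machine tracks the sequential head position on its address-tape, and in the other the sequential machine stores the target address in a logarithmic-size counter and walks its input head into position, trading time for space. The only cosmetic difference is that the paper always resets the input head to the leftmost cell and walks right while decrementing a counter, whereas you walk from the current position in either direction; both stay within $O(\log \hat n)$ extra space.
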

\begin{proof}
We give the idea behind the proof; the proof itself is straightforward. We take as the definition of the standard (sequential-access) Turing machine the definition of the random-access Turing machine given in Section \ref{sec:random-access}, except that we suppose a sequential-access read-only-head for the input tape, and remove the address-tape.

A random-access Turing machine $M_r$ can simulate a sequential-access Turing machine $M_s$ directly by using its address-tape to simulate the movement of the head of the sequential-access input-tape. In the simulation, when the head of the input-tape of $M_s$ is on the $i+1$-th cell, the address-tape of $M_r$ holds the number $i$ in binary, and hence refers to the $i+1$-th cell of the input. When the head of the input-tape of $M_s$ moves right, the machine $M_r$ will increase the binary number in its address-tape by one. Similarly, when the head of the input-tape of $M_s$ moves left, the machine $M_r$ will decrease the binary number in its address-tape by one. A total of $\lceil \log n \rceil$ bits suffices to access any bit of an input of length $n$.
Clearly increasing or decreasing a binary number of length at most $\lceil \log n \rceil$ by one can be done in $\polylogspace$. The rest of the simulation is straightforward.

The simulation of the other direction is a bit more complicated, as after each time the content of the address-tape of the  random-access machine is updated, we need to calculate the corresponding position of the head of the input-tape of the sequential-access machine. However, this computation can be clearly done in $\polylogspace$: We use a work-tape of the sequential-access machine to mimic the address-tape of the sequential-access machine, and an additional work-tape as a binary counter. After each computation step of the random-access machine, the sequential-access machine moves the head of its input tape to its leftmost cell, formats the work-tape working as a binary counter to contain exactly the binary number that is written on the address-tape. Then the sequential-access machine moves the head of its input-tape right step-by-step simultaneously decreasing the binary counter by $1$. Once the binary counter reaches $0$, the head of the input tape is in correct position. The rest of the simulation is straightforward. 
\end{proof}

Since the function $\left\lceil\log n \right\rceil$ is \emph{space constructible} (s.c. for short) (see~\cite{papadimitriou}, where these functions are denoted as \emph{proper}), and for any two s.c. functions their product is also s.c., we get that for any $k \geq 1$ the function $(\left\lceil\log n \right\rceil)^k$ is s.c. Hence, by Savitch's theorem, we obtain the following result.

\begin{fact}
For any $k \geq 1$, it holds that $\mathrm{NSPACE}[(\left\lceil\log n \right\rceil)^k] \subseteq \mathrm{DSPACE}[(\left\lceil\log n \right\rceil)^{2k}]$. Thus, nondeterministic and deterministic $\polylogspace$ coincide. 
\end{fact}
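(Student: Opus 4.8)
The plan is to invoke Savitch's theorem in its standard form, which states that for any space-constructible function $f(n) \geq \log n$ we have $\mathrm{NSPACE}[f(n)] \subseteq \mathrm{DSPACE}[f(n)^2]$. The only thing that must be checked before applying the theorem is that the space bound $(\lceil\log n\rceil)^k$ is itself space-constructible, since Savitch's construction simulates a nondeterministic computation by deterministically searching over configurations, and this search must be carried out within the allotted deterministic space. As the text preceding the statement already observes, $\lceil\log n\rceil$ is space-constructible (being a proper/honest function in the sense of \cite{papadimitriou}), and the product of two space-constructible functions is space-constructible; iterating this $k$ times shows $(\lceil\log n\rceil)^k$ is space-constructible for every fixed $k \geq 1$.

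With space-constructibility in hand, the first inclusion follows immediately: set $f(n) = (\lceil\log n\rceil)^k$ in Savitch's theorem to obtain $\mathrm{NSPACE}[(\lceil\log n\rceil)^k] \subseteq \mathrm{DSPACE}[(\lceil\log n\rceil)^{2k}]$, which is exactly the displayed containment. One subtlety to address is that the complexity classes here are defined with respect to the size $n$ of the domain of the input structure, using the direct-access (or equivalently random-access) Turing machine model; but since Savitch's theorem is a statement about abstract space-bounded Turing computation and the direct-access model differs from the standard model only in how it reads the input (not in how it uses its work-tape space), the theorem transfers without modification. The reductions between models established earlier (Corollary~\ref{tba} and the preceding proposition) guarantee that these space classes are robust across the model variants in play.

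For the second assertion, that nondeterministic and deterministic $\polylogspace$ coincide, the plan is to take unions over $k$. By definition $\polylogspace = \bigcup_{k} \mathrm{DSPACE}[(\lceil\log n\rceil)^k]$ and its nondeterministic analogue is $\bigcup_{k} \mathrm{NSPACE}[(\lceil\log n\rceil)^k]$. The inclusion of deterministic into nondeterministic is trivial. For the reverse, given any language in $\mathrm{NSPACE}[(\lceil\log n\rceil)^k]$, the first part places it in $\mathrm{DSPACE}[(\lceil\log n\rceil)^{2k}]$, which is contained in the deterministic $\polylogspace$ union. Since this holds for every $k$, the two unions are equal.

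The whole argument is essentially a direct appeal to a classical theorem, so there is no genuine obstacle; the only point requiring care is the verification of space-constructibility of the polylogarithmic bound, which is precisely what the sentence introducing the statement supplies. I would therefore keep the proof to a single sentence pointing at Savitch's theorem together with the space-constructibility remark, exactly as the paper does.
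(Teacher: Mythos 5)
Your proposal is correct and follows essentially the same route as the paper: establish space-constructibility of $(\lceil\log n\rceil)^k$ via closure of space-constructible functions under products, apply Savitch's theorem, and take unions over $k$. The additional remarks on the robustness of the machine model are a reasonable extra precaution but do not change the argument.
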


\subsection{Index logic with partial fixed point operators captures $\polylogspace$}
To encode a configuration of polylogarithmic size, we follow a similar strategy as in Theorem \ref{captureResult}, i.e., in the proof of the characterization of $\dpolylog$ by $\mathrm{IL(IFP)}$.
The difference here is that there is no reason to encode the whole history of a computation in the fixed point. At a time step $t$ it suffices that the configuration of the machine at time step $t-1$ is encoded; hence, we may drop the variables $\bar{t}$, from the fixed point formula defined on page \pageref{IFPformula}. Moreover, we make a small alteration to the Turing machines so that acceptance on an input structure will correspond to the existence of a partial fixed point.

\begin{theorem}\label{captureResultPolyLogSpace}
Over ordered finite structures, $\mathrm{IL(PFP)}$ captures $\polylogspace$.
\end{theorem}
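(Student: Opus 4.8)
The plan is to prove both inclusions, mirroring the two halves of the proof of Theorem~\ref{captureResult} but re-accounting for space in one direction and replacing the inflationary fixed point by the partial one in the other. For $\mathrm{IL(PFP)}\subseteq\polylogspace$, I would first show that every sentence $\varphi$ of $\mathrm{IL(PFP)}$ is evaluable in polylogarithmic space by a direct-access Turing machine, using the same tape layout as before: one designated work-tape per variable in $\var$, holding the binary encoding of its value. Values of sort $\bf n$ and $\bf v$ occupy $\lceil\log n\rceil$ cells and a relation variable of arity $k$ occupies $\log^k n$ cells, all polylogarithmic in $n$. By structural induction the connectives, quantifiers, and atomic subformulae are handled exactly as in Theorem~\ref{captureResult}, and since the recursion depth is bounded by the constant size of $\varphi$, the total space stays polylogarithmic. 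The only genuinely new case is the $\mathrm{PFP}$ operator.

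For $[\mathrm{PFP}_{\bar{\ttx},X}\psi]\bar{\tty}$ I would keep on one work-tape the current stage $S^i\subseteq\Num(\mA)^k$, a string of $\log^k n$ bits, and compute $S^{i+1}=F(S^i)$ on a second tape by running over all $k$-tuples in lexicographic order, evaluating $\psi$ in polylogarithmic space by the induction hypothesis, and then overwriting. Two observations make this fit in polylog space: storing a single stage costs only $\log^k n$ bits; and since there are at most $2^{\log^k n}$ candidate relations, the sequence $(S^i)$ either stabilises within $2^{\log^k n}$ steps or cycles forever, so it suffices to iterate while a binary counter stays below $2^{\log^k n}$, which needs only $\log^k n$ bits. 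If a fixed point $S^{i+1}=S^i$ is found, membership of $\val(\bar{\tty})$ is read off; if the counter is exhausted first, $\mathrm{pfp}=\emptyset$ and the answer is negative. By Corollary~\ref{tba} this direct-access bound transfers to the random-access model, giving $\mathrm{IL(PFP)}\subseteq\polylogspace$.

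For the converse $\polylogspace\subseteq\mathrm{IL(PFP)}$ I would reuse the machine simulation of Theorem~\ref{captureResult} with the two changes announced before the theorem. First, because the space, and hence the number of cells, the head positions, and the entire configuration, of a $\polylogspace$ machine $M$ are bounded by $\log^k n$, a whole configuration is encoded by arity-$k$ relations over $\Num(\mA)$ exactly as the $S_q$, $T_i^s$, $H_i$ were, but now \emph{without} the time tuple $\bar{t}$: the stage index of the fixed-point iteration itself plays the role of time. Second, since $\mathrm{PFP}$ uses $S^{i+1}=F(S^i)$ rather than $S^i\cup F(S^i)$, each stage holds precisely the configuration at one time step, and the transition subformulae — the disjunctions $\alpha^s_i$ together with the $\mathrm{check}$ and $\bit$ gadgets for relation- and function-queries — carry over verbatim once $\bar{t}$ and $\bar{t}-1$ are erased, defining the successor configuration from the current one. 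A simultaneous $\textrm{S-PFP}$ thus runs $M$'s computation step by step, and the quasi-polynomially many iterations that $\mathrm{PFP}$ affords suffice because $M$ has at most $2^{\log^k n}$ configurations.

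The crux, and the step I expect to be hardest, is arranging that $\mA\in L(M)$ holds iff the partial fixed point takes its intended value. I would modify $M$ so that on acceptance it moves to a unique sink configuration that maps to itself under the transition, making $(S^i)$ stabilise to a genuine nonempty fixed point, whereas on rejection — or on looping, which a space-bounded machine may do — it resets to its initial configuration, forcing $(S^i)$ to cycle without stabilising so that $\mathrm{pfp}=\emptyset$. The defining sentence then asserts that a designated ``accepting'' tuple lies in the partial fixed point. Verifying this correspondence requires checking that the reset/sink modification is itself transition-definable, that looping computations indeed yield no fixed point, and that the sink is reached within the available number of iterations; this bookkeeping, together with confirming that a polylogarithmic-space configuration genuinely fits in arity-$k$ relations over $\Num(\mA)$, is the technical heart of the argument.
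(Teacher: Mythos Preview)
Your proposal is correct and follows essentially the same route as the paper. Both directions match: the $\mathrm{IL(PFP)}\subseteq\polylogspace$ half is exactly the paper's argument (induction on formulae, with the $\mathrm{PFP}$ case handled by storing one stage in $\log^k n$ bits and iterating under a $\log^k n$-bit counter), and for $\polylogspace\subseteq\mathrm{IL(PFP)}$ both you and the paper drop the time tuple $\bar t$ from the Theorem~\ref{captureResult} simulation and let the $\textrm{S-PFP}$ stage index play the role of time.

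The one divergence worth flagging is your treatment of non-acceptance. The paper does something slightly simpler than your reset scheme: it only stipulates that the accepting state $q_a$ is a sink, and the defining sentence is literally $[\textrm{S-PFP}_{S_{q_a},\ldots}\,\ldots]$ with $S_{q_a}$ a $0$-ary relation. Thus if the sequence stabilises at a configuration in state $q_a$ the sentence is true; if it stabilises at any other configuration the $S_{q_a}$ component of the fixed point is false; and if it cycles the partial fixed point is $\emptyset$ and again $S_{q_a}$ is false. No explicit reset-on-rejection is needed, because a non-accepting fixed point already makes the sentence false. Your reset idea is not wrong, but note that ``reset on looping'' cannot be implemented by modifying $M$ (a machine cannot detect that it is looping); fortunately it is also unnecessary, since a looping $M$ either cycles through several configurations (no fixed point) or sits in a non-accepting $1$-cycle (fixed point with $S_{q_a}$ false), and both cases are handled automatically. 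One small detail you will need, which the paper supplies, is a replacement for the test $\bar t\sim 0$ that detects the first iteration: the paper uses the conjunction $\neg S_{q_0}\wedge\cdots\wedge\neg S_{q_{a-1}}$, which holds exactly when no state has yet been set.
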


\begin{proof}
The direction of the proof that argues that IL(PFP) can indeed be evaluated in $\polylogspace$ is straightforward. Let $\psi$ be an IL(PFP)-sentence, we only need to show that there exists a direct-access Turing machine $M_{\psi}$ working in $O(\log^{d} n)$ space, for some constant $d$, such that for every structure $\mathbf{A}$ and valuation $\it{val}$, it holds that $\mathbf{A} \in L(M_{\psi})$ iff $\mathbf{A}, \it{val} \models \psi$.
Note that, in an induction on the structure of $\psi$, all the cases, except the case for the $\mathrm{PFP}$ operator, are as in the proof of Theorem~\ref{captureResult}. Clearly if a formula can be evaluated in $\mathrm{PolylogTime}$ it can also be evaluated in $\mathrm{PolylogSpace}$. For the case of the $\mathrm{PFP}$ operator (using a similar strategy as in~\cite{ef_fmt2}), we set a counter to
$2^{\log^{r} n}$, using exactly $\log^{r} n$ cells in a work-tape, where $r$ is the arity of the relation variable $X$ bounded by the $\mathrm{PFP}$ operator.
To evaluate the $\mathrm{PFP}$ operator, say on a formula $\varphi (\bar{\mathtt{x}}, X)$, $M$ will iterate evaluating $\varphi$, decreasing the counter in each iteration. When the counter gets to $0$, $M$ checks whether the contents of the relation $X$ is equal to its contents in the following cycle, and whether the tuple given in the  $\mathrm{PFP}$ application belongs to it. If both answers are positive, then $M$ accepts, otherwise, it rejects. This suffices to find the fixed point (or to conclude that it does not exist), as there are $2^{\log^{r} n}$ many relations of arity $r$ with domain $\{0,\dots, \lceil \log n \rceil-1\}$.


For the converse, let $M = (Q, \Sigma, \delta, q_0, F, \sigma)$ be an $m$-tape direct-access Turing machine that works in $\polylogspace$. As in the proof of Theorem \ref{captureResult}, we assume w.l.o.g., that $F = \{q_a\}$ (i.e., there is only one accepting state), $|Q| = a+1$, and $Q = \{q_0, q_1, \ldots, q_{a}\}$. In addition to the assumptions made in the proof of Theorem \ref{captureResult}, 
we assume that once the machine reaches an accepting state, it will not change its configuration any longer; that is, all of its heads stay put, and write the same symbol as the head reads.
Note that the machine $M$ accepts if and only if $M$ is in the same accepting configuration during two consecutive time steps.

We build an IL(PFP)-sentence $\psi_{M}$ such that for every structure $\mathbf{A}$ and valuation $\it{val}$, it holds that $\mathbf{A} \in L(M)$ iff $\mathbf{A},\it{val} \models \psi_{M}$. The formula is a derivative of that of Theorem \ref{captureResult} and is defined using a simultaneous PFP operator. In the formula below,  $S_{q_0}, \ldots, S_{q_{a}}$ denote $0$-ary relation variables that range over the values \emph{true} and \emph{false}. We define
\begin{equation*}\label{PFPformula}
 \psi_{M} := [\textrm{S-PFP}_{
S_{q_a}, \mathrm{A}, \mathrm{B}_1, \mathrm{B}_2, \mathrm{B}_3, \mathrm{C} 
} 
\;\varphi_{q_a}, \Phi_\mathrm{A}, \Phi_{\mathrm{B}_1}, \Phi_{\mathrm{B}_2}, \Phi_{\mathrm{B}_3}, \Phi_{\mathrm{C}}],
\end{equation*}   
where
\[\mathrm{A} = S_{q_0}, \ldots, S_{q_{a-1}} \quad
 \mathrm{B}_1 = \bar{p}, T^0_1, \ldots, \bar{p}, T^0_m \quad
\mathrm{B}_2 = \bar{p}, T^1_1, \ldots, \bar{p}, T^1_m
\]
\[\mathrm{B}_3 = \bar{p}, T^\sqcup_1, \ldots, \bar{p}, T^\sqcup_m \quad
\mathrm{C} = \bar{p}, H_1, \ldots, \bar{p}, H_m
\]
\[\Phi_\mathrm{A} = \varphi_{q_0}, \ldots, \varphi_{q_{a-1}} \quad
\Phi_{\mathrm{B}_1} = \psi_{01}, \ldots, \psi_{0m} \quad 
\Phi_{\mathrm{B}_2} = \psi_{11}, \ldots, \psi_{1m}\]
\[\Phi_{\mathrm{B}_3} = \psi_{\sqcup1}, \ldots, \psi_{\sqcup m}\quad
\Phi_{\mathrm{C}} = \gamma_{1}, \ldots, \gamma_{m}.\]

The formulae used in the PFP operator are defined in the same way as in Theorem  \ref{captureResult}; with the following two exceptions.
\begin{enumerate}
\item The formulae of the form $\alpha^0_i(\bar{p}, \bar{t} - 1)$ are replaced with the analogous formulae $\alpha^0_i(\bar{p})$ obtained, by simply removing the variables referring to time steps.
\item Subformulas of the form $\bar{t}\sim 0$ are replaces with $\neg S_{q_0} \land \ldots \land \neg S_{q_{a-1}}$, which will be true only on the first iteration of the fixed point calculation.
\end{enumerate}

Following the proof of Theorem  \ref{captureResult}, it is now easy to show that  $\mA,\it{val} \models \psi_{M}$
if and only if \emph{$M$ accepts $\mA$}.
\end{proof}

\section{Discussion}

An interesting open question concerns order-invariant queries.
Indeed, while index logic is defined to work on ordered
structures, it is natural to try to understand which queries
about ordered structures that are actually invariant of the
order, are computable in PolylogTime.  Results of the kind given
by Proposition~\ref{emptiness} already suggest that very little
may be possible.  Then again, any polynomial-time numerical
property of the size of the domain is clearly computable.  We
would love to have a logical characterization of the
order-invariant queries computable in PolylogTime.

Another natural direction is to get rid of Turing machines
altogether and work with a RAM model working directly on
structures, as proposed by Grandjean and Olive
\cite{grol_graph_lin}.  Plausibly by restricting their model to
numbers bounded in value by a polynomial in $n$ (the size of the
structure), we would get an equivalent PolylogTime complexity
notion.

In this vein, we would like to note that extending index logic
with numeric variables that can hold values up to a polynomial in
$n$, with arbitrary polynomial-time functions on these, would
be useful syntactic sugar that would, however, not increase the
expressive power.

\section*{References}

\bibliography{database}

\end{document}